\def\comment#1{}
\newcommand{\argmin}{\arg\!\min}
\newcommand{\Bmat}{{\boldsymbol B}}
\newcommand{\Cmat}{{\boldsymbol C}}
\newcommand{\Dmat}{{\boldsymbol D}}
\newcommand{\Hmat}[0]{{{\boldsymbol H}}}
\newcommand{\Imat}{{\boldsymbol I}}
\newcommand{\Qmat}[0]{{{\boldsymbol Q}}}
\newcommand{\Rmat}[0]{{{\boldsymbol R}}}
\newcommand{\Xmat}{{\boldsymbol X}}
\newcommand{\Ymat}[0]{{{\boldsymbol Y}}}
\newcommand{\Zmat}{{\boldsymbol Z}}
\newcommand{\dv}{\boldsymbol{d}}
\newcommand{\uv}[0]{{\boldsymbol{u}}}
\newcommand{\vv}{\boldsymbol{v}}
\newcommand{\xv}{\boldsymbol{x}}
\newcommand{\yv}{\boldsymbol{y}}
\newcommand{\zv}{\boldsymbol{z}}
\newcommand{\Thetamat}{\boldsymbol{\Theta}}
\newcommand{\Phimat}{\boldsymbol{\Phi}}
\newcommand{\thetav}{\boldsymbol{\theta}}
\newcommand{\ts}{^{\top}}
\newcommand{\inv}{^{-1}}
\newtheorem{definition}{Definition}
\newtheorem{theorem}{Theorem}
\newtheorem{corollary}{Corollary}
\newtheorem{lemma}{Lemma}
\newtheorem{assumption}{Assumption}
\ifcvprfinal\pagestyle{empty}\fi
\begin{document}

%%%%%%%%% TITLE
\title{Plug-and-Play Algorithms for Large-scale Snapshot Compressive Imaging}

\author{Xin Yuan\\
Bell Labs\\
NJ USA\\
{\tt\small xyuan@bell-labs.com}
% For a paper whose authors are all at the same institution,
% omit the following lines up until the closing ``}''.
% Additional authors and addresses can be added with ``\and'',
% just like the second author.
% To save space, use either the email address or home page, not both
\and
Yang Liu\\
MIT\\
MA USA\\
{\tt\small yliu12@mit.edu}
\and
Jinli Suo \qquad\qquad ~Qionghai Dai\\
%Inst. for Brain and Cognitive Sciences and\\
%Automation Dept., Tsinghua Univ., Beijing China\\
Dept. of Automation \& Institute for Brain and \\Cognitive Sciences, Tsinghua Univ., Beijing China\\
{\tt\small \{jlsuo,daiqh\}@tsinghua.edu.cn}
}

\maketitle
%\thispagestyle{empty}

%%%%%%%%% ABSTRACT
\begin{abstract}
Snapshot compressive imaging (SCI) aims to capture the high-dimensional (usually 3D) images using a 2D sensor (detector) in a single snapshot.
 Though enjoying the advantages of low-bandwidth, low-power and low-cost, applying SCI to large-scale problems (HD or UHD videos) in our daily life is still challenging.
 The bottleneck lies in the reconstruction algorithms; they are either too slow (iterative optimization algorithms) or not flexible to the encoding process (deep learning based end-to-end networks).  
 In this paper, we develop fast and flexible algorithms for SCI based on the plug-and-play (PnP) framework.
 In addition to the widely used PnP-ADMM method, 
 we further propose the PnP-GAP (generalized alternating projection) algorithm with a lower computational workload and prove the convergence\footnote{We have found an error in the proof of the camera ready version of the CVPR paper in the CVF website.
 		Specifically, the lower bound in Eq. (25) is wrong. Following this, the proved global converge of PnP-GAP does not hold. In this new version, we show another convergence proof of PnP-GAP.} of PnP-GAP under the SCI hardware constraints. 
 By employing deep denoising priors, we first time show that PnP can recover 
 a UHD color video ($3840\times 1644\times 48$ with PNSR above 30dB) from a snapshot 2D measurement. 
 Extensive results on both simulation and real datasets verify the superiority of our proposed algorithm. The code is available at \url{https://github.com/liuyang12/PnP-SCI}
\end{abstract}

%%%%%%%%% BODY TEXT
\section{Introduction \label{Sec:Intro}}

Computational imaging~\cite{Altmanneaat2298,Mait18CI} constructively combines optics, electronics and algorithms for optimized performance~\cite{BradyNature12,Brady18Optica,Ouyang2018DeepLM} or to provide new abilities~\cite{Brady15AOP,Llull2015_book,Qiao:19_DH3D,Tsai15OL} to imaging systems. 
One important branch of computational imaging with promising applications is snapshot compressive imaging (SCI)~\cite{Patrick13OE,Wagadarikar08CASSI}, which utilized a two-dimensional (2D) camera to capture 3D  video or spectral data. 
Different from conventional cameras, such imaging systems adopt {sampling} on \emph{a set of consecutive images}--video frames (\eg, CACTI~\cite{Patrick13OE,Yuan14CVPR}) or spectral channels (\eg, CASSI~\cite{Wagadarikar09CASSI})--in accordance with the sensing matrix and {integrating} these sampled signals along time  or spectrum to obtain the final compressed measurements. With this technique, SCI systems~\cite{Gehm07,Hitomi11ICCV,Reddy11CVPR,Tsai15COSI,Wagadarikar08CASSI,Wagadarikar09CASSI,Yuan14CVPR} can capture the high-speed motion~\cite{Sun16OE,Sun17OE,Yuan&Pang16_ICIP,Yuan16BOE,Yuan17AO,Yuan13ICIP} and high-resolution spectral information~\cite{Miao19ICCV,Yuan15JSTSP,Renna16_TIT_side} but with low memory, low bandwidth, low power and potentially low cost. 
In this work, we focus on video SCI reconstruction. 

\begin{figure*}[t]
 	\begin{center}
 		\includegraphics[width=\linewidth]{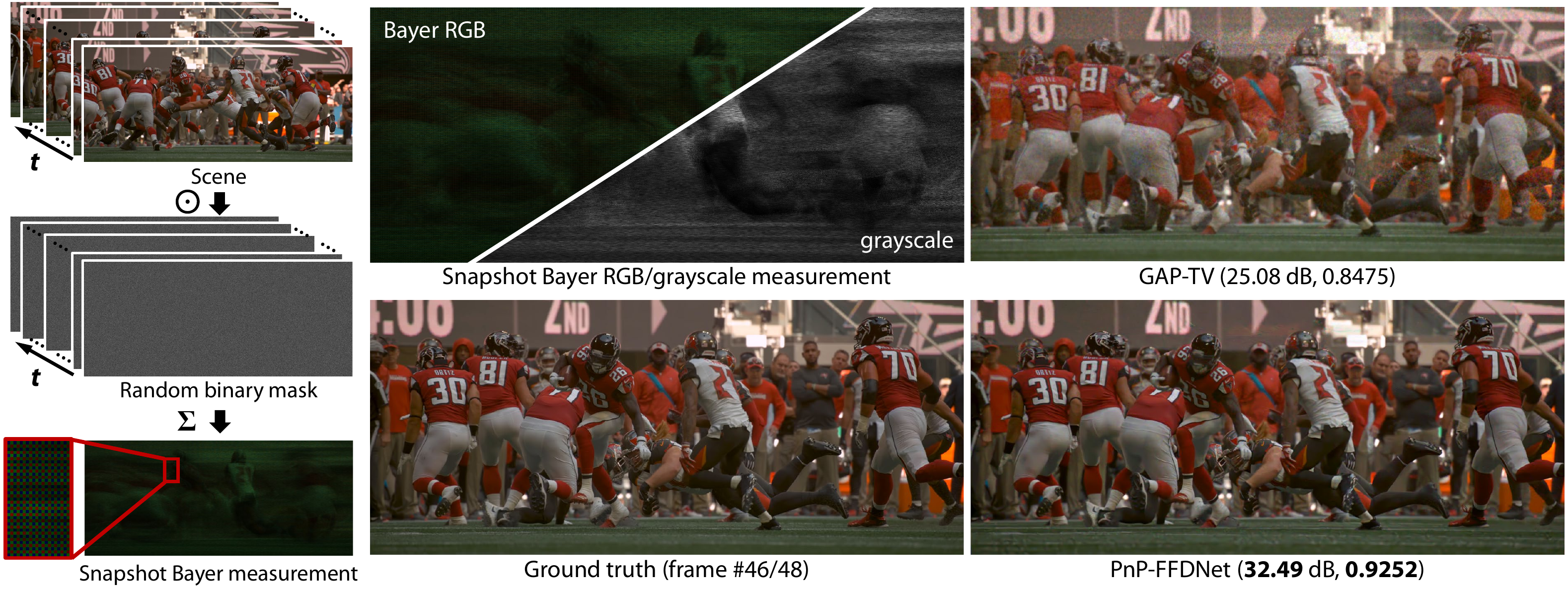}  % football in 4K (3840x1644) color
 	\end{center}
 	\vspace{-4mm}
 	\caption{Sensing process of video SCI (left) and the reconstruction results using the proposed PnP-FFDNet (bottom-right). The captured image (middle-top) size is {\bf UHD ($\mathbf{3840\times1644}$)} and {\bf 48 frames} are recovered from {\bf a snapshot measurement}. GAP-TV (top-right) takes 180 mins and PnP-FFDNet takes 55 mins for the reconstruction. All other methods are too slow (more than 12 hours) to be used. 
 	}
 %	\vspace{-3mm}
 	\label{fig:performance_4k}
 \end{figure*}

In parallel to the hardware development, various algorithms have been employed and developed for SCI reconstruction.
In addition to the widely used TwIST~\cite{Bioucas-Dias2007TwIST}, Gaussian Mixture Model (GMM) in~\cite{Yang14GMMonline,Yang14GMM} based algorithms model the pixels within a spatial-temporal patch by a GMM. GAP-TV~\cite{Yuan16ICIP_GAP} adopts the idea of total variance minimization under the generalized alternating projection (GAP)~\cite{Liao14GAP} framework. %Sparse coding~\cite{Renna16_TIT_side,Wang17PAMI} has also been developed.
Most recently, DeSCI proposed in~\cite{Liu18TPAMI}  has led to state-of-the-art results.
However, the slow speed of DeSCI precludes its real applications, especially to the HD ($1280\times720$), FHD ($1920\times1080$) or UHD ($3840\times1644$  in Fig.~\ref{fig:performance_4k} and $3840\times2160$ in Fig.~\ref{fig:comp_largescale}) videos, which are getting popular in our daily life.
Recall that DeSCI needs more than one hour to reconstruct a $256\times256\times8$ video from a snapshot measurement.
GAP-TV, by contrast, as a fast algorithm, cannot provide good reconstruction to be used in real applications (in general, this needs the PSNR$\ge30$dB).
An alternative solution is to train an end-to-end network~\cite{Ma19ICCV,Qiao2020_APLP} to reconstruct the videos for the SCI system. On one hand, this approach can finish the task within seconds and by appropriate design of multiple GPUs, an end-to-end sampling and reconstruction framework can be built. On the other hand, this method loses the {\em robustness} of the network since whenever the sensing matrix (encoding process) changes, a new network has to be re-trained. Moreover, it cannot be readily used in adaptive sensing~\cite{Yuan13ICIP}.

Therefore, it is desirable to devise an {\em efficient} and {\em flexible} algorithm for SCI reconstruction, especially for large-scale problems. This will pave the way of applying SCI in our daily life~\cite{Yuan2020SPM}.
%Bearing these in mind, this paper makes the following contributions:
In order to solve the {\em trilemma of speed, accuracy and flexibility} for SCI reconstruction, this paper makes the following contributions:
\begin{enumerate}
	\item Inspired by the plug-and-play (PnP) alternating direction method of multiplier (ADMM)~\cite{Chan2017PlugandPlayAF} framework, we extend PnP-ADMM to SCI and show that PnP-ADMM converges to a {fixed point} by considering the hardware constraints and the special structure of the sensing matrix~\cite{Jalali19TIT_SCI} in SCI.
	\item We propose an efficient PnP-GAP algorithm by using various {\em bounded denoisers} (Fig.~\ref{fig:demo}) into the GAP~\cite{Liao14GAP}, which has a lower computational workload than PnP-ADMM. %In addition to the fixed point convergence, 
	We further prove that, under proper assumptions, the solution of PnP-GAP will converge. % to the true signal. To our best knowledge, this is the first {\em global} convergence results for SCI and this also holds under the additive white Gaussian noise.
	\item 
	By integrating the deep image denoiser, \eg, the {\em fast} and {\em flexible} FFDNet~\cite{Zhang18TIP_FFDNet} into PnP-GAP,	we show that a FHD video ($1920\times1080\times24$) can be recovered from a snapshot measurement (Fig.~\ref{fig:comp_largescale}) within 2 minutes with PSNR close to 30dB using a single GPU plus a normal computer.
	%Since a frame-wise (2D image) denoiser is employed, is used for reconstruction, which is finished  
	Compared with an end-to-end network~\cite{Ma19ICCV}, dramatic resources have been saved (no re-training is required). 
	This further makes the UHD compression using SCI to be feasible (a $3840\times1644\times 48$ video is reconstructed with PSNR above 30dB in Fig.~\ref{fig:performance_4k}). To our best knowledge, this is the first time that SCI is used in these large-scale problems.
	%\item 
	\item We apply our developed PnP algorithms to extensive simulation and real datasets (captured by real SCI cameras) to verify the efficiency and robustness of our proposed algorithms. We show that the proposed algorithm can obtain results on-par with DeSCI but with a significant reduction of computational time.  
\end{enumerate}

\begin{figure}[!htbp]
%\vspace{-2mm}
	\begin{center}
		\includegraphics[width=0.95\linewidth]{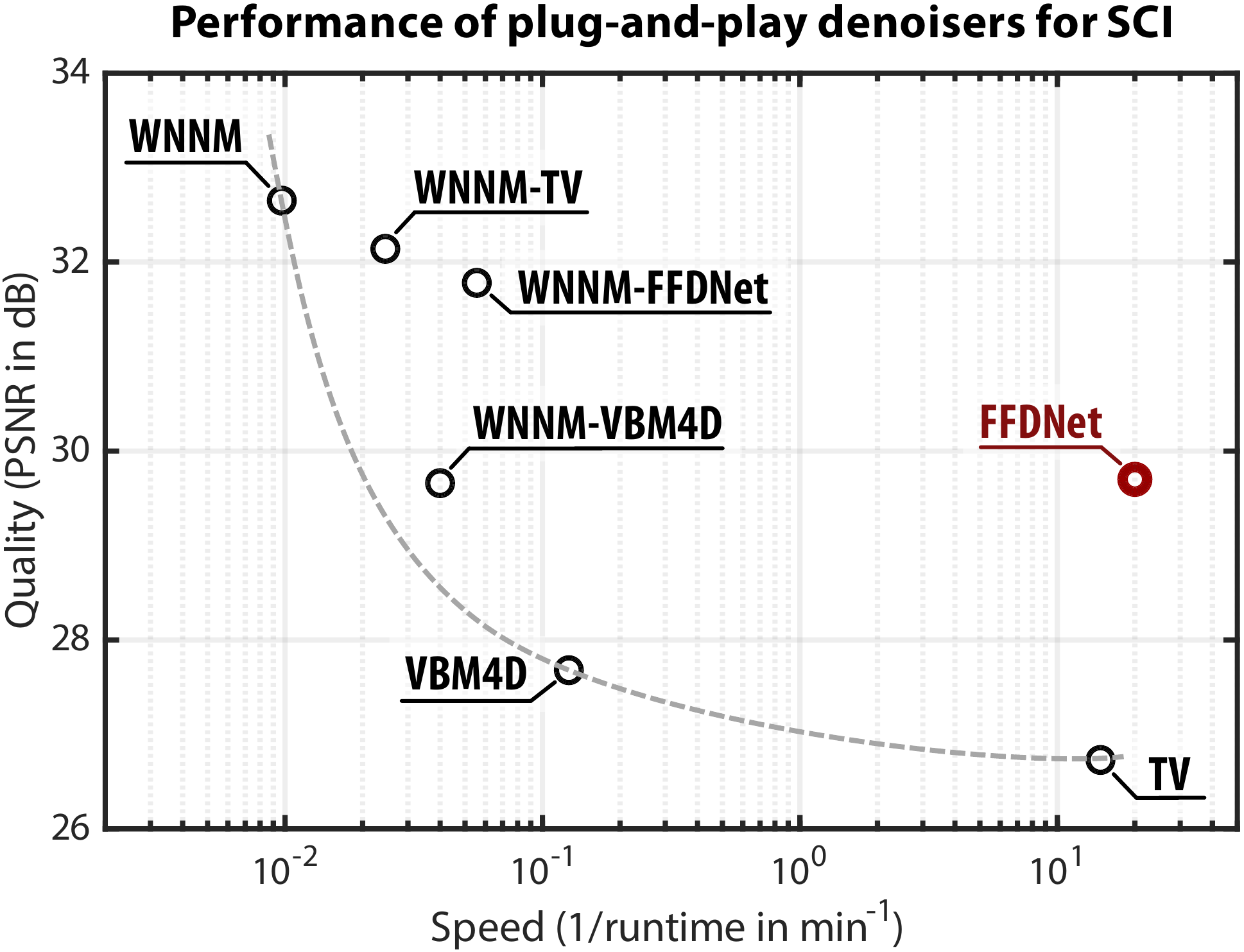}
	\end{center}
%	\vspace{-4mm}
	\caption{Trade-off of quality and speed of various plug-and-play denoising algorithms for SCI reconstruction.}
%	\vspace{-5mm}
	\label{fig:demo}
	%\label{fig:onecol}
\end{figure}

The rest of this paper is organized as follows. Sec.~\ref{Sec:SCImodel} reviews the mathematical model of video SCI. Sec.~\ref{Sec:PnP_ADMM} develops the PnP-ADMM under the SCI hardware constraints and shows that PnP-ADMM converges to a fixed point. Sec.~\ref{Sec:PnP_GAP} develops the PnP-GAP algorithm and proves its global convergence. 
Sec.~\ref{Sec:P3} integrates various denoisers into to the PnP framework for SCI reconstruction. % and Sec.~\ref{Sec:Related} reviewed some related work.
Extensive results of both (benchmark and large-scale) simulation and real data are presented in Sec.~\ref{Sec:results} and Sec.~\ref{Sec:Con} concludes the paper.

%\vspace{-4mm}
\paragraph{Related Work \label{Sec:Related}}
SCI systems have been developed to capture  3D spectral images~\cite{Cao16SPM,Renna16_TIT_side,Yuan15JSTSP}, videos~\cite{Hitomi11ICCV,Qiao2020_CACTI,Patrick13OE,Llull2015_book,Reddy11CVPR,Sun16OE,Sun17OE,Tsai15OL,Yuan14CVPR,Yuan17_COSI_rgbD}, high dynamic range~\cite{Yuan_16_OE}, depth~\cite{Llull15Optica,Yuan16AO,Yuan16COSI} and polarization~\cite{Tsai15OE} images, \etc. 
From the algorithm side, in addition to sparsity~\cite{Yuan2020TMM,Yuan14TSP,Yuan16SJ,Yuan14Tree,Yuan17_CSCFA_ICIP,Yuan18SP,Zhang18CVPR,Zha2020_TIP_RRC,Zha2018NonconvexW} based algorithms, GMM~\cite{Yang14GMMonline,Yang14GMM,Yuan15GMM} and  GAP-TV~\cite{Yuan16ICIP_GAP} have been proposed. 
As mentioned above, DeSCI~\cite{Liu18TPAMI} has led to state-of-the-art results.% in SCI reconstruction. %T%he only drawback of DeSCI is the running time, which usually takes hours to reconstruct 24 channel spectral images.
%However, most of these algorithms treat video and hyperspectral images in SCI as matrices, while these 3D data indeed is a {\em tensor}.
%Therefore, for the first time in the literature, we aim to exploit the tensor based algorithm in SCI reconstruction.
%The Tensor Nuclear Norm Minimization using ADMM (TNN-ADMM) algorithm was proposed in~\cite{zhang2014novel} and turned to be effective for recovering corrupted images. Nevertheless, TNN-ADMM was only tested on corrupted images without any compressive sensing. 	
%
Inspired by deep learning on image restoration~\cite{zhang2017beyond}, researchers have started using deep learning in computational imaging~\cite{Iliadis18DSPvideoCS,Jin17TIP,Kulkarni2016CVPR,LearningInvert2017,George17lensless,Miao:19_DH3D,Yuan18OE}. Some networks have been proposed for SCI reconstruction~\cite{Ma19ICCV,Miao19ICCV,Qiao2020_APLP,Yoshida18ECCV}. 
%The models in \cite{sun2016deep}~and~\cite{zhang2018ista} successfully unfolded the convex algorithms (ISTA and ADMM) for MRI image reconstruction into deep neural networks. %These methods successfully unfolded the convex algorithms (ISTA and ADMM) for MRI image reconstruction into deep neural networks.
Different from these methods, in this work, we integrate various denoisers into PnP framework~\cite{Chan2017PlugandPlayAF,Ryu2019PlugandPlayMP} for SCI reconstruction, thus to provide {\em efficient} and {\em flexible} algorithms for SCI. Our PnP algorithms can not only provide excellent results but also are robust to different coding process and thus can be used in adaptive sensing.

\section{Mathematical Model of SCI~\label{Sec:SCImodel}}
As depicted in Fig.~\ref{fig:performance_4k}, in the video SCI system (\eg, CACTI)~\cite{Patrick13OE}, consider that a  $B$-frame video $\Xmat \in \mathbb{R}^{n_x \times n_y \times B}$ is modulated  and compressed by $B$ sensing matrices (masks) $\Cmat\in \mathbb{R}^{n_x \times n_y \times B}$, and the measurement frame $\Ymat \in \mathbb{R}^{n_x\times n_y} $ can be expressed as~\cite{Patrick13OE,Yuan14CVPR}
\begin{equation}\label{Eq:System}
 \Ymat = \sum_{b=1}^B \Cmat_b\odot \Xmat_b + \Zmat,
\end{equation}
where $\Zmat \in \mathbb{R}^{n_x \times n_y }$ denotes the noise; $\Cmat_b = \Cmat(:,:,b)$ and $\Xmat_b = \Xmat(:,:,b) \in \mathbb{R}^{n_x \times n_y}$ represent the $b$-th sensing matrix (mask) and the corresponding video frame, respectively; $\odot$ denotes the Hadamard (element-wise) product. 
Mathematically, the measurement in \eqref{Eq:System} can be expressed by 
\begin{equation}\label{Eq:ghf}
\yv = \Hmat \xv + \zv,
\end{equation}
where $\yv = \text{Vec}(\Ymat) \in \mathbb{R}^{n_x n_y}$ and $\zv= \text{Vec}(\Zmat) \in \mathbb{R}^{n_x n_y}$. Correspondingly, the video signal $\xv \in \mathbb{R}^{n_x n_y B}$ is
\begin{equation}
\xv = \text{Vec}(\Xmat) = [\text{Vec}(\Xmat_1)\ts,..., \text{Vec}(\Xmat_B)\ts]\ts.
\end{equation}
Unlike traditional compressive sensing~\cite{Candes05compressed,donoho2006compressed}, the sensing matrix $\Hmat \in \mathbb{R}^{n_x n_y \times n_x n_y B}$ in video SCI is sparse and is a concatenation of diagonal matrices
\begin{equation}\label{Eq:Hmat_strucutre}
\Hmat = [\Dmat_1,...,\Dmat_B].
\end{equation}
where $\Dmat_b = \text{diag}(\text{Vec}(\Cmat_b)) \in {\mathbb R}^{n \times n}$ with $n = n_x n_y$ , for $b =1,\dots B$.
Consequently, the {\em sampling rate} here is equal to  $1/B$. It has been proved recently in~\cite{Jalali18ISIT,Jalali19TIT_SCI} that the reconstruction error of SCI is bounded even when $B>1$.

In the color video case, as shown in Figs.~\ref{fig:performance_4k}, \ref{fig:comp_largescale} and \ref{fig:real_color_hammer}, the raw data captured by the generally used Bayer pattern sensors have ``RGGB" channels. 
Since the mask is imposed on each pixel, the generated measurement can be treated as a grayscale image as in Fig.~\ref{fig:real_chopperwheel} and when it is shown in color, the demosaicing procedure cannot generate the right color due to mask modulation (Fig.~\ref{fig:comp_largescale}). Therefore, during reconstruction, we first recover each of these four channels independently and then perform demosaicing in the reconstructed videos. The final demosaiced RGB video is the desired signal~\cite{Yuan14CVPR}.

\section{Plug-and-Play ADMM for SCI~\label{Sec:PnP_ADMM}}
The inversion problem of SCI can be modeled as
\begin{equation}
 {\hat \xv} = \argmin_{\xv} f(\xv) + \lambda g(\xv), \label{Eq:uncontr}
\end{equation}
where $f(\xv)$ can be seen as the loss of the forward imaging model, \ie, $\|\yv-\Hmat\xv\|_2^2$ and $g(\xv)$ is a prior being used. This prior is usually playing the role of a regularizer.

\subsection{Review the Plug-and-Play ADMM in~\cite{Chan2017PlugandPlayAF}}
Via using the ADMM framework~\cite{Boyd11ADMM}, by introducing an auxiliary parameter $\vv$, the unconstrained optimization in Eq.~\eqref{Eq:uncontr} can be converted into
\begin{equation} \label{Eq:ADMM_xv}
({\hat \xv}, {\hat \vv}) = \argmin_{\xv,\vv} f(\xv) + \lambda g(\vv), {\text{ subject to }} \xv = \vv.
\end{equation}  
This minimization can be solved by the following sequence of sub-problems
\begin{align}
\xv^{(k+1)} &= \argmin_{\xv} f(\xv) + \frac{\rho}{2} \|\xv - (\vv^{(k)}-\frac{1}{\rho} \uv^{(k)})\|_2^2,  \label{Eq:solvex}\\
\vv^{(k+1)} &= \argmin_{\vv} \lambda g(\vv) + \frac{\rho}{2}\|\vv - (\xv^{(k)}+\frac{1}{\rho} \uv^{(k)})\|_2^2, \label{Eq:solvev}\\
\uv^{(k+1)} &= \uv^{(k)} + \rho (\xv^{(k+1)} - \vv^{(k+1)}), \label{Eq:u_k+1}
\end{align}
where the superscript $^{(k)}$ denotes the iteration number.

While in SCI and other inversion problems, $f(\xv)$ is usually a quadratic form and there are various solutions to Eq.~\eqref{Eq:solvex}. In PnP-ADMM, the solution of Eq.~\eqref{Eq:solvev} is replaced by an off-the-shelf denoising algorithm, to yield
\begin{equation}
{\vv^{(k+1)} = {\cal D}_{\sigma} (\xv^{(k)}+\frac{1}{\rho} \uv^{(k)})}.
\end{equation}
where ${\cal D}_{\sigma}$ denotes the denoiser being used with $\sigma$ being the standard deviation of the assumed additive white  Gaussian noise.
In~\cite{Chan2017PlugandPlayAF}, the authors proposed to update the $\rho$ in each iteration by $\rho_{k+1} = \gamma_k \rho_k$ with $\gamma_k \ge 1$ and setting $\sigma_k = \sqrt{\lambda/\rho_k}$ for the denoiser. In this manner, the author defined the {\em bounded denoiser} and proved the {\em fixed point} convergence of the PnP-ADMM.
%\vspace{-1mm}
\begin{definition} (Bounded Denoiser~\cite{Chan2017PlugandPlayAF}): A bounded denoiser with a parameter $\sigma$ is a function ${\cal D}_{\sigma}: {\mathbb R}^n \rightarrow {\mathbb R}^n$ such that for any input $\xv\in {\mathbb R}^{n}$, 
	\begin{equation}
	\frac{1}{n}\|{\cal D}_{\sigma}(\xv) - \xv\|_2^2 \le \sigma^2 C,
	\end{equation}
	for some universal constant $C$ independent of $n$ and $\sigma$. 
	\label{Definition1}
\end{definition}
\vspace{-1mm}
With this definition (constraint on the denoiser) and the assumption of $f:[0,1]^n \rightarrow {\mathbb R}$ having bounded gradient, which is for any $\xv \in [0,1]^n$, there exists $L < \infty$ such that $\|\nabla f(\xv)\|_2/\sqrt{n} \le L$, the authors of~\cite{Chan2017PlugandPlayAF} have proved that: 
the iterates of the PnP-ADMM demonstrates a fixed-point convergence. That is, 
there exists $(\xv^*, \vv^*, \uv^*)$ such that $\|\xv^{(k)} - \xv^*\|_2 \rightarrow 0$, $\|\vv^{(k)} - \vv^*\|_2 \rightarrow 0$, and $\|\uv^{(k)} - \uv^*\|_2 \rightarrow 0$ as $ k\rightarrow \infty$.
%
%The convergence of $\xv^{(k)} \rightarrow \vv^{(k)}$ holds because $\uv^{(k+1)} = \uv^{(k)} + \rho (\xv^{(k+1)} - \vv^{(k+1)})$ converges. %In the proof $\gamma_k > 1$ was used.
%The proof sketch follows that $\thetav^{(k)}  = \left(\xv^{(k)},\vv^{(k)},\uv^{(k)}\right)$ is a Cauchy sequence.

%\textcolor{green}{Other Plug-and-Play framework has also been investigated in~\cite{PnP2019ICML}}.
%\vspace{-1mm}
\subsection{PnP-ADMM for SCI}
In SCI, with the model stated in Eq.~\eqref{Eq:ghf}, $\xv \in {\mathbb R}^{nB}$, and we consider the loss function $f(\xv)$ as
\begin{equation}
\textstyle 	f(\xv) = \frac{1}{2}\|\yv - \Hmat \xv\|_2^2.
\end{equation}
Consider all the pixel values are normalized into $[0,1]$. 

\begin{lemma} In SCI, the function $f(\xv) = \frac{1}{2}\|\yv-\Hmat\xv\|_2^2$ has bounded gradients, \ie $\|\nabla f(\xv)\|_2\leq B \|\xv\|_2$. 
	\label{Lemma:fx_grad}
\end{lemma}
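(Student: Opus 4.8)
The plan is to differentiate $f$ explicitly and then exploit the block–diagonal structure of the SCI sensing matrix $\Hmat$, so that everything reduces to a spectral–norm estimate. First I would write $\nabla f(\xv) = \Hmat\ts(\Hmat\xv - \yv)$, and then by the triangle inequality and submultiplicativity of the operator norm,
\[
\|\nabla f(\xv)\|_2 \;\le\; \|\Hmat\ts\Hmat\|_2\,\|\xv\|_2 \;+\; \|\Hmat\|_2\,\|\yv\|_2 .
\]
So the entire task is to control $\|\Hmat\|_2$, and this is where the special structure of SCI enters.

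The key structural computation is the following. Since $\Hmat = [\Dmat_1,\dots,\Dmat_B]$ with each $\Dmat_b = \text{diag}(\text{Vec}(\Cmat_b))$ diagonal and every mask entry lying in $[0,1]$, we have $\Hmat\Hmat\ts = \sum_{b=1}^B \Dmat_b\Dmat_b\ts = \sum_{b=1}^B \Dmat_b^2$, a diagonal matrix whose $i$-th diagonal entry is $\sum_{b=1}^B \Cmat_b(i)^2 \le B$. Hence $\Hmat\Hmat\ts \preceq B\,\Imat$, which gives $\|\Hmat\|_2 \le \sqrt{B}$ and $\|\Hmat\ts\Hmat\|_2 = \|\Hmat\|_2^2 \le B$. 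Substituting this back yields $\|\Hmat\ts\Hmat\xv\|_2 \le B\|\xv\|_2$, which is exactly the asserted inequality once the measurement-dependent term $\Hmat\ts\yv$ is set aside — i.e. the stated bound $\|\nabla f(\xv)\|_2 \le B\|\xv\|_2$ is really the bound on the \emph{linear part} of the gradient, equivalently the bound $\|\Hmat\ts\Hmat\|_2 \le B$ on the Hessian.

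The hard part — really the only subtlety — is that the literal bound drops the constant term $\Hmat\ts\yv$, which does not vanish at $\xv = 0$; so to make the result actually usable in the convergence analysis I would instead establish the \textbf{bounded-gradient property on the cube}: for all $\xv \in [0,1]^{nB}$ one has $\|\nabla f(\xv)\|_2 \le B\sqrt{nB} + \sqrt{B}\,\|\yv\|_2 \le L\sqrt{nB}$ for a finite constant $L$, using $\|\xv\|_2 \le \sqrt{nB}$ on the cube together with the fact that $\yv$ is a fixed, finite measurement. Restricting to the cube is harmless here because the PnP iterates stay in (a neighborhood of) $[0,1]^{nB}$ after the denoising/clipping step. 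The whole argument is elementary linear algebra, and the single thing to be careful about is not silently discarding the $\Hmat\ts\yv$ contribution when this lemma is later fed into the PnP-ADMM/PnP-GAP convergence proofs.
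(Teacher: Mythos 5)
Your proposal is correct and, at its core, follows the same route as the paper: compute $\nabla f(\xv)=\Hmat\ts\Hmat\xv-\Hmat\ts\yv$, then exploit the concatenated-diagonal structure $\Hmat=[\Dmat_1,\dots,\Dmat_B]$ to bound the linear part by $B\|\xv\|_2$. The differences are in rigor rather than strategy, and they favor you. First, the paper bounds $\|\Hmat\ts\Hmat\xv\|_2\le B C_{\rm max}\|\xv\|_2$ by an informal ``weighted sum'' remark on the block matrix $\Hmat\ts\Hmat=[\Dmat_i\Dmat_j]_{i,j}$, whereas you obtain the same constant cleanly from the spectral identity $\|\Hmat\ts\Hmat\|_2=\|\Hmat\Hmat\ts\|_2=\max_j\sum_{b=1}^B D_{b,j}^2\le B$ (i.e.\ $\Hmat\Hmat\ts\preceq B\,\Imat$ when the masks lie in $[0,1]$); this is the same diagonal matrix $\Rmat$ the paper uses elsewhere, so your argument also ties the lemma more directly to the rest of the analysis. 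Second, you are right that the literal statement $\|\nabla f(\xv)\|_2\le B\|\xv\|_2$ silently discards the constant term $\Hmat\ts\yv$ (it fails at $\xv=\mathbf{0}$ unless $\Hmat\ts\yv=\mathbf{0}$); the paper's proof only notes that this term is a constant and then proves boundedness of the quadratic part. Your fix --- stating the bounded-gradient property on the cube $[0,1]^{nB}$, e.g.\ $\|\nabla f(\xv)\|_2\le B\sqrt{nB}+\sqrt{B}\,\|\yv\|_2$, so that $\|\nabla f(\xv)\|_2/\sqrt{nB}\le L$ for a finite $L$ --- is exactly the form of assumption invoked from Chan et al.\ for the fixed-point convergence of PnP-ADMM, so your version is the one that is actually usable downstream. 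In short: same decomposition and same structural fact about $\Hmat$, but your operator-norm derivation and your explicit handling of the affine term make the argument tighter than the paper's.
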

%\vspace{-3mm}
\begin{proof}
	The gradient of $f(\xv)$ in SCI is 
\begin{equation}
\nabla f(\xv) = \Hmat\ts\Hmat\xv-\Hmat\ts\yv,
\end{equation} where $\Hmat$ is a block diagonal matrix of size $n\times nB$ . 
\begin{list}{\labelitemi}{\leftmargin=12pt \topsep=0pt \parsep=0pt}
	\item The $\Hmat\ts\yv$ is a non-negative constant since both the measurement $\yv$ and the mask are non-negative in nature.
	\item Now let's focus on $\Hmat\ts\Hmat\xv$. Since
	\begin{align}
	\label{eq_sesci_PTP}
	\Hmat\ts\Hmat&=
	\left[
	\begin{matrix}
	\Dmat_1 \\ 
	\vdots \\ 
	\Dmat_B
	\end{matrix}
	\right] \left[
	\begin{matrix}
	\Dmat_1 \dots \Dmat_B
	\end{matrix}
	\right]\\
	& = \left[
	\begin{matrix}
	\Dmat_1^2& \Dmat_1\Dmat_2 & \cdots & \Dmat_1\Dmat_B\\
	\Dmat_1 \Dmat_2& \Dmat^2_2 & \cdots & \Dmat_2\Dmat_B\\
	\vdots & \vdots & \ddots & \vdots\\
	\Dmat_1 \Dmat_B& \Dmat_2\Dmat_B & \cdots & \Dmat^2_B
	\end{matrix}
	\right],
	\end{align}
\end{list}
due to this special structure, $\Hmat\ts\Hmat\xv$ is the weighted sum of the $\xv$ and $\|\Hmat\ts\Hmat\xv\|_2\leq B C_{\rm max}\|\xv\|_2$, where $C_{\rm max}$ is the maximum value in the sensing matrix. Usually, the sensing matrix is normalized to $[0,1]$ and this leads to $C_{\rm max}=1$ and therefore $\|\Hmat\ts\Hmat\xv\|_2\leq B \|\xv\|_2$.

Thus, $\nabla f(\xv)$ is bounded.
Furthermore, 
\begin{itemize}
	\item If the mask element $D_{i,j}$ is drawn from a binary distribution with entries \{0,1\} with a property of $p_1 \in (0,1)$ being 1, then
	\begin{eqnarray}
	\|\Hmat\ts\Hmat\xv\|_2\leq p_1 B \|\xv\|_2
	\end{eqnarray}
	with a high probability; usually, $p_1 = 0.5$ and thus $\|\Hmat\ts\Hmat\xv\|_2\leq 0.5 B \|\xv\|_2$.
	\item If the mask element $D_{i,j}$ is drawn from a Gaussian distribution ${\cal N}(0, \sigma^2)$ as in~\cite{Jalali18ISIT,Jalali19TIT_SCI}, though it is not practical to get negative modulation (values of $D_{i,j}$) in hardware, 
	\begin{eqnarray}
	\|\Hmat\ts\Hmat\xv\|_2\leq  B\sigma^2 \|\xv\|_2\stackrel{\sigma = 1}{=} B\|\xv\|_2,
	\end{eqnarray}
	with a high probability.
\end{itemize}
\end{proof}

\vspace{-2mm}
Recall that in~\eqref{Eq:Hmat_strucutre}, $\{\Dmat_i\}_{i=1}^B$ is a diagonal matrix and we denote its diagonal elements by
\begin{equation}
\Dmat_i = {\rm diag} (D_{i,1}, \dots, D_{i,n}).
\end{equation}
Thereby, in SCI, $\Hmat\Hmat\ts$ is diagonal matrix, \ie
\begin{equation}
{\Rmat = \Hmat\Hmat\ts = {\rm diag}(R_1, \dots, R_n)},\label{eq:R}
\end{equation}
where $ R_{j} = \sum_{b=1}^{B} D^2_{b,j}, \forall j = 1,\dots,n$.
We define
\begin{align}
R_{\rm max} &\stackrel{\rm def}{=} \max(R_1, \dots, R_n)= \lambda_{\rm max}(\Hmat\Hmat\ts),\\
R_{\rm min} &\stackrel{\rm def}{=} \min(R_1, \dots, R_n) =\lambda_{\rm min}(\Hmat\Hmat\ts),
\end{align}
where $\lambda_{\rm min}(\cdot)$ and $\lambda_{\rm max}(\cdot)$ represent the minimum and maximum eigenvalues of the ensured matrix.
%
%Therefore, the convergence result of Plug-and-Play ADMM proved in~\cite{Chan2017PlugandPlayAF} also fits SCI with the above constraint on $f(\xv)$ and the bounded denoiser in Definition~\ref{Definition1}.  
\begin{assumption} \label{Ass:1}
	We assume that $\{R_j\}_{j=1}^n >0$. This means for each spatial location $j$, the $B$-frame modulation masks at this location have at least one non-zero entries. We further assume $R_{\rm max} > R_{\rm min}$. 
\end{assumption}
\vspace{-2mm}
This assumption makes senses in hardware as we expect at least one out of the $B$ frames is captured for each pixel during the sensing process.
Lemma~\ref{Lemma:fx_grad} along with the bounded denoiser in Definition~\ref{Definition1} give us the following Corollary.
\begin{corollary}
\label{Coro1}
	Consider the sensing model of SCI in \eqref{Eq:ghf} and further assume the elements in the sensing matrix satisfying Assumption~\ref{Ass:1}. Given $\{\Hmat,\yv\}$, ${\xv}$ is solved iteratively via PnP-ADMM with bounded denoiser, then $\xv^{(k)}$ and $\thetav^{(k)}$ will converge to a fixed point.
\end{corollary}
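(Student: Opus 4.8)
The plan is to reduce the claim to the fixed-point convergence theorem of Chan et al.~\cite{Chan2017PlugandPlayAF} by verifying that the two structural hypotheses of that theorem hold in the SCI setting: (i) the data-fidelity term $f(\xv)=\tfrac12\|\yv-\Hmat\xv\|_2^2$ has bounded gradient on $[0,1]^{nB}$, and (ii) the denoiser is a bounded denoiser in the sense of Definition~\ref{Definition1}. Hypothesis (ii) is assumed outright in the statement of the corollary, so the real work is (i). Lemma~\ref{Lemma:fx_grad} already gives $\|\nabla f(\xv)\|_2 \le B\|\xv\|_2$; since we restrict to normalized pixel values $\xv\in[0,1]^{nB}$, we have $\|\xv\|_2 \le \sqrt{nB}$, hence $\|\nabla f(\xv)\|_2/\sqrt{nB} \le B\sqrt{nB}/\sqrt{nB}\cdot\tfrac{1}{\sqrt{B}}$—more cleanly, $\|\nabla f(\xv)\|_2/\sqrt{n} \le B\sqrt{B} =: L < \infty$, a constant independent of the problem dimension $n$. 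This is exactly the bounded-gradient condition required in~\cite{Chan2017PlugandPlayAF}.

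Next I would recall the $\xv$-update in PnP-ADMM for this $f$: because $f$ is quadratic, Eq.~\eqref{Eq:solvex} has the closed form $\xv^{(k+1)} = (\Hmat\ts\Hmat + \rho_k \Imat)\inv\big(\Hmat\ts\yv + \rho_k(\vv^{(k)} - \tfrac1{\rho_k}\uv^{(k)})\big)$, and by the Sherman–Morrison–Woodbury identity this inversion only requires inverting the diagonal matrix $\Rmat + \rho_k \Imat = \Hmat\Hmat\ts + \rho_k\Imat$ from Eq.~\eqref{eq:R}. Assumption~\ref{Ass:1} (each $R_j>0$, so $\Rmat$ is invertible and well-conditioned away from $0$) guarantees this step is well defined and stable; I would note this is where the SCI hardware constraint enters. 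With $f$ proper, closed, convex with bounded gradient and the denoiser bounded, the hypotheses of the PnP-ADMM fixed-point theorem are met verbatim, and the conclusion—existence of $(\xv^*,\vv^*,\uv^*)$ with $\|\xv^{(k)}-\xv^*\|_2\to0$, $\|\vv^{(k)}-\vv^*\|_2\to0$, $\|\uv^{(k)}-\uv^*\|_2\to0$—follows, where $\thetav^{(k)}$ in the corollary refers to the scaled dual variable (or the denoised auxiliary variable) whose convergence is part of that same conclusion.

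The only genuine obstacle, and the step I would be most careful about, is making sure the constant $L$ really is dimension-independent: Lemma~\ref{Lemma:fx_grad}'s bound scales with $\|\xv\|_2$, and on $[0,1]^{nB}$ that grows like $\sqrt{nB}$, so one must divide by $\sqrt{n}$ (not by $1$) to land on a fixed $L=B^{3/2}$, matching the normalization $\|\nabla f(\xv)\|_2/\sqrt n \le L$ used in~\cite{Chan2017PlugandPlayAF}. I would also double-check that the $\rho_k\to\infty$ schedule $\rho_{k+1}=\gamma_k\rho_k$ with $\sigma_k=\sqrt{\lambda/\rho_k}$ is compatible with the diagonal structure of $\Hmat\Hmat\ts$—it is, since the $\xv$-update stays well-posed for every $\rho_k>0$ regardless of how large it becomes—so nothing in the SCI-specific sensing matrix obstructs the original argument. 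Beyond that, the proof is essentially a citation plus the two-line verification above; no new estimates are needed.
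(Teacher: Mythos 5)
Your proposal is correct and follows essentially the same route as the paper, which simply invokes the fixed-point theorem of \cite{Chan2017PlugandPlayAF} after Lemma~\ref{Lemma:fx_grad} (bounded gradient of $f$) and Definition~\ref{Definition1} (bounded denoiser); the paper omits the verification details you spell out. Your added care about the dimension-independence of $L$ and the well-posedness of the diagonal inversion $(\Hmat\Hmat\ts+\rho_k\Imat)\inv$ under Assumption~\ref{Ass:1} is sound and consistent with what the paper leaves implicit.
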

%\vspace{-3mm}
\begin{proof}
The proof follows \cite{Chan2017PlugandPlayAF} and thus omitted here.
\end{proof}
%\vspace{-5mm}

\section{Plug-and-Play GAP for SCI \label{Sec:PnP_GAP}}
%\vspace{-1mm}
\vspace{-1mm}
In this section, following the GAP algorithm~\cite{Liao14GAP} and the above conditions on PnP-ADMM, we propose PnP-GAP for SCI, which as mentioned before, has a lower computational workload (and thus faster) than PnP-ADMM. 
%\vspace{-1mm}
\begin{algorithm}[!htbp]
	\caption{Plug-and-Play GAP}
	\begin{algorithmic}[1]
		\REQUIRE$\Hmat$, $\yv$.
		\STATE Initial $\vv^{(0)}$, $\lambda_0$, $\xi<1$.
		\WHILE{Not Converge}
		\STATE Update $\xv$ by Eq.~\eqref{Eq:x_k+1}. %$\xv^{(k+1)} =  \vv^{(k)} + \Hmat\ts (\Hmat \Hmat\ts)\inv (\yv - \Hmat \vv^{(k)})$.
		\STATE Update $\vv$ by denoiser  $\vv^{(k+1)} = {\cal D}_{\sigma_k}(\xv^{(k+1)})$.% with $\sigma_{k} = \sqrt{\lambda_{k}}$. % \label{eq:GAP_updates}
		\IF {$\Delta_{k+1}\ge \eta \Delta_k$}
		\STATE {$\lambda_{k+1} = \xi \lambda_k$,}
		\ELSE 
		\STATE {$\lambda_{k+1} =  \lambda_k$.}
		\ENDIF
		%		\STATE Projection via compression:  $\xv^{t+1} = g(f(\sv^{t+1}))$.
		\ENDWHILE
		%		\STATE $\textbf{Output:}$ Reconstructed signal $\hat \xv$.
	\end{algorithmic}
	\label{algo:PP_GAP}
\end{algorithm}
%\vspace{-1mm}
\subsection{Algorithm}
Different from the ADMM in Eq.~\eqref{Eq:ADMM_xv}, GAP solves SCI by the following problem
\begin{equation} \label{Eq:GAP_xv}
({\hat \xv}, {\hat \vv}) = \argmin_{\xv,\vv} \frac{1}{2}\|\xv - \vv\|_2^2 + \lambda g(\vv), ~{\text{s.t.}}~~ \yv = \Hmat\xv.
\end{equation}
Similarly to ADMM, the minimizer in Eq.~\eqref{Eq:GAP_xv} is solved by a sequence of (now 2) subproblems and we again let $k$ denotes the iteration number.
\begin{itemize}%{\labelitemi}{\leftmargin=10pt \topsep=0pt \parsep=0pt}
	\item Solving $\xv$: given $\vv$, $\xv^{(k+1))}$ is updated via an Euclidean projection of
	$\vv^{(k)}$ on the linear manifold ${\cal M}: \yv = \Hmat \xv$,
	\begin{equation}
	\xv^{(k+1)} =  \vv^{(k)} + \Hmat\ts (\Hmat \Hmat\ts)\inv (\yv - \Hmat \vv^{(k)}), \label{Eq:x_k+1}
	\end{equation}
	where as defined in~\eqref{eq:R}, $(\Hmat \Hmat\ts)\inv$ is fortunately a diagonal matrix and this has been observed and used in a number of SCI inversion problems. 
	\item Solving $\vv$: given $\xv$, updating $\vv$ can be seen as an denoising problem and
	\begin{equation}
	{\textstyle \vv^{(k+1)} = {\cal D}_{\sigma}(\xv^{(k+1)}).} \label{Eq:Denoise_GAP}
	\end{equation}
	Here, various denoiser can be used with $\sigma = \sqrt{\lambda}$.
\end{itemize}  
We can see that in each iteration, the only parameter to be tuned is $\lambda$ and we thus set $\lambda_{k+1} = \xi_k \lambda_k$ with $\xi_k\le 1$.
Inspired by the PnP-ADMM, we update $\lambda$ by the following two rules:
\begin{list}{\labelitemi}{\leftmargin=12pt \topsep=0pt \parsep=0pt}
	\item [a)] Monotone update by setting
	$\lambda_{k+1} = \xi \lambda_k$, with $\xi<1$. % for all $k$.
	\item [b)] Adaptive update
  by considering the relative residue:
	\begin{eqnarray}
	{\textstyle \Delta_{k+1} = \frac{1}{\sqrt{nB}}\left(\|\xv^{(k+1)} - \xv^{(k)}\|_2 + \|\vv^{(k+1)} - \vv^{(k)}\|_2\right)}.\nonumber \label{eq:Delta}
	\end{eqnarray}
	For any $\eta \in [0,1)$ and let $\xi<1$ be a constant, $\lambda_k$ is conditionally updated according  to the following settings:
	\begin{list}{\labelitemi}{\leftmargin=14pt \topsep=0pt \parsep=0pt}
		\item [i)] If $\Delta_{k+1}\ge \eta \Delta_k$, then $\lambda_{k+1} = \xi \lambda_k$.
		\item [ii)] If $\Delta_{k+1}< \eta \Delta_k$, then $\lambda_{k+1} =  \lambda_k$.
	\end{list}
\end{list}
With this adaptive updating of $\lambda_k$, the full PnP-GAP algorithm for SCI is exhibited in Algorithm~\ref{algo:PP_GAP}.

\subsection{Convergence}
\begin{assumption}\label{Ass:non_in}
	(Non-increasing denoiser) The denoiser in each iteration of PnP-GAP ${\cal D}_{\sigma_{k}}: {\mathbb R}^{nB} \rightarrow {\mathbb R}^{nB}$ performs denoising in a non-increasing order, \ie, $\sigma_{k+1}\le \sigma_k$ ($\lambda_{k+1}\le \lambda_k$). Further, when $k\rightarrow+\infty$, $\sigma_k \rightarrow 0$.
\end{assumption} 
This assumption makes sense since as the algorithm proceeds we expect the algorithm's estimate of the underlying signal to become more accurate, which means that the denoiser needs to deal with a less noisy signal.  
This is also guaranteed by the $\lambda$ setting in Algorithm~\ref{algo:PP_GAP}. 
%in the first few stages, we expect the algorithm's estimate of the signal to be  strong, \ie, removing heavy noise and in the last few stages, we expect the CNNs to extract fine details, \ie, removing light noise. This has also been observed in the PnP framework~\cite{Yuan2020_CVPR_PnP}.
With this assumption, we have the following convergence result of GAP-net.
\begin{theorem} \label{The:GAP_SCI_bound}
Consider the sensing model of SCI. Given $\{\Hmat,\yv\}$, ${\xv}$ is solved by PnP-GAP with bounded denoiser in a non-increasing order, then $\xv^{(k)}$ converges.
\end{theorem}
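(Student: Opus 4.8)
The goal is to show that the PnP-GAP iterates $\xv^{(k)}$ form a convergent sequence in $\mathbb{R}^{nB}$. My strategy is to exhibit the sequence as Cauchy by controlling $\|\xv^{(k+1)} - \xv^{(k)}\|_2$ with a summable bound. First I would combine the two update rules into one relation: substituting the denoising step $\vv^{(k)} = {\cal D}_{\sigma_{k-1}}(\xv^{(k)})$ into the projection step \eqref{Eq:x_k+1} gives
\begin{equation}
\xv^{(k+1)} - \xv^{(k)} = (\Imat - \Hmat\ts(\Hmat\Hmat\ts)\inv\Hmat)\left({\cal D}_{\sigma_{k-1}}(\xv^{(k)}) - {\cal D}_{\sigma_{k-2}}(\xv^{(k-1)})\right), \nonumber
\end{equation}
using that $\Hmat\xv^{(k)} = \yv$ for all $k$ after the first projection, so the affine terms cancel. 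The operator $\Pmat := \Imat - \Hmat\ts(\Hmat\Hmat\ts)\inv\Hmat$ is the orthogonal projection onto $\ker\Hmat$, hence $\|\Pmat\|_2 = 1$. So the increment is bounded by $\|{\cal D}_{\sigma_{k-1}}(\xv^{(k)}) - {\cal D}_{\sigma_{k-2}}(\xv^{(k-1)})\|_2$.

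The next step is where the bounded-denoiser hypothesis (Definition~\ref{Definition1}) does the work. Writing $\xv^{(k)} = \vv^{(k-1)} + \Pmat^{\perp}$-correction and using $\|{\cal D}_{\sigma}(\xv) - \xv\|_2 \le \sqrt{nC}\,\sigma$, I would estimate
\begin{equation}
\|\vv^{(k+1)} - \vv^{(k)}\|_2 \le \|\vv^{(k+1)} - \xv^{(k+1)}\|_2 + \|\xv^{(k+1)} - \xv^{(k)}\|_2 + \|\xv^{(k)} - \vv^{(k)}\|_2 \le \sqrt{nC}(\sigma_k + \sigma_{k-1}) + \|\xv^{(k+1)} - \xv^{(k)}\|_2, \nonumber
\end{equation}
and similarly relate $\|\xv^{(k+1)} - \xv^{(k)}\|_2$ back to $\|\vv^{(k)} - \vv^{(k-1)}\|_2$ plus $O(\sigma_{k-1} + \sigma_{k-2})$ terms via the displayed increment identity and the triangle inequality. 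Chaining these, the "genuine" contraction is absent (the projection is nonexpansive, not strictly contractive), so I cannot get geometric decay directly; instead I would lean on Assumption~\ref{Ass:non_in}: since $\sigma_k \to 0$, and — crucially — the adaptive rule in Algorithm~\ref{algo:PP_GAP} forces $\lambda_k$ (hence $\sigma_k$) to be multiplied by $\xi < 1$ whenever the residue $\Delta_{k+1}$ fails to shrink by the factor $\eta$, one argues that either $\Delta_k \to 0$ along the whole sequence, or $\lambda_k \to 0$ geometrically often, which in turn drives the $\sigma$-dependent terms to zero fast enough to be summable. The cleanest route: show $\Delta_{k+1} \le \eta\Delta_k + c(\sigma_{k-1}+\sigma_{k-2})$ and that $\sum_k \sigma_k < \infty$ (this is the extra mileage I'd want from Assumption~\ref{Ass:non_in}, or I'd strengthen it to geometric decay of $\sigma_k$, consistent with the monotone rule $\lambda_{k+1} = \xi\lambda_k$), whence $\sum_k \Delta_k < \infty$ by a standard lemma on such recursions, and therefore $\sum_k \|\xv^{(k+1)} - \xv^{(k)}\|_2 < \infty$, so $\{\xv^{(k)}\}$ is Cauchy and converges.

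**Main obstacle.** The delicate point is the interplay between the adaptive $\lambda$-schedule and summability. If one only assumes $\sigma_k \to 0$ (as literally stated in Assumption~\ref{Ass:non_in}), then $\sum\sigma_k$ need not converge and the Cauchy argument can fail; the proof really needs the schedule to enforce a geometric (or at least summable) rate, which the rule in Algorithm~\ref{algo:PP_GAP} plausibly does because every "bad" step shrinks $\lambda$ by $\xi<1$ while "good" steps already have $\Delta_{k+1}<\eta\Delta_k$. Making this dichotomy rigorous — ruling out the pathological case where infinitely many "good" steps occur with $\lambda$ frozen yet $\Delta_k$ not summable — is the crux. I would handle it by a case split: if $\lambda_k$ is eventually constant, then all but finitely many steps satisfy $\Delta_{k+1} < \eta\Delta_k$, giving geometric decay of $\Delta_k$ directly (hence summability and convergence); otherwise $\lambda_k$ is updated infinitely often, so $\lambda_k \to 0$ with a subsequence decaying like $\xi^{m}$, and combined with the residue recursion this again yields $\sum_k\Delta_k<\infty$. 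Wrapping both cases together gives the claimed convergence of $\xv^{(k)}$.
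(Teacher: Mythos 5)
Your proposal is correct and, in substance, it follows the same route as the paper's detailed (supplementary) proof of this theorem: bound the iterate increments through the nonexpansive projection $\Imat-\Hmat\ts(\Hmat\Hmat\ts)\inv\Hmat$ and the bounded-denoiser property, then convert the $\lambda$-schedule of the PnP-GAP algorithm into a geometric, hence summable, rate so that the iterates form a Cauchy sequence, handling the monotone rule directly and the adaptive rule by the dichotomy on how often $\lambda$ is cut. Two remarks. First, your increment identity can be streamlined: since $\Hmat\xv^{(k)}=\yv$ for $k\ge 1$, the paper writes $\xv^{(k+1)}-\xv^{(k)}=(\Imat-\Hmat\ts(\Hmat\Hmat\ts)\inv\Hmat)(\vv^{(k)}-\xv^{(k)})$, whose norm is directly at most $\|\vv^{(k)}-\xv^{(k)}\|_2=\|{\cal D}_{\sigma_k}(\xv^{(k)})-\xv^{(k)}\|_2\le\sqrt{nBC}\,\sigma_k$, so there is no need to chain through $\|\vv^{(k)}-\vv^{(k-1)}\|_2$ and collect extra $O(\sigma)$ terms. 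Second, your ``main obstacle'' is exactly the right diagnosis: the short proof printed under the theorem only establishes $\|\xv^{(k+1)}-\xv^{(k)}\|_2\to 0$, which by itself does not imply convergence, and the paper's appendix closes this gap precisely as you propose, using $\lambda_{k+1}=\xi\lambda_k$ (so $\sigma_k=\sqrt{\lambda_k}$ decays geometrically and the increments are summable) for the monotone rule, and for the adaptive rule an envelope of the form $\tilde{C}\,(\max\{\xi,\eta\})^{k}$ on the successive differences, obtained from the same case split you sketch (only finitely many $\lambda$-cuts, only finitely many ``good'' steps, or both infinitely often). To finish your plan you would only need to write out the induction that produces this $\max\{\xi,\eta\}$ geometric bound in the mixed case, which is what rules out the pathological scenario you were worried about; with that, your argument and the paper's coincide.
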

\begin{proof}
	From \eqref{Eq:x_k+1}, $\xv^{(k+1)} =  \vv^{(k)} + \Hmat\ts (\Hmat \Hmat\ts)\inv (\yv - \Hmat \vv^{(k)})$, we have
	\begin{equation}
	\xv^{(k+1)}-\xv^{(k)} =  \vv^{(k)}-\xv^{(k)} + \Hmat\ts (\Hmat \Hmat\ts)\inv (\yv - \Hmat \vv^{(k)}). %\label{Eq:x_k+1}
	\end{equation}
	Following this, 
	%Then
	\begin{align}
	&\|\xv^{(k+1)} - \xv^{(k)}\|_2^2 \nonumber\\
	=&\|\vv^{(k)} + \Hmat\ts (\Hmat \Hmat\ts)\inv (\yv - \Hmat \vv^{(k)}) - \xv^{(k)} \|^2_2 \\
	=& \|\vv^{(k)} + \Hmat\ts (\Hmat \Hmat\ts)\inv (\Hmat\xv^{(k)} - \Hmat \vv^{(k)}) - \xv^{(k)} \|^2_2 \nonumber\\
	=& \|(\Imat - \Hmat\ts (\Hmat \Hmat\ts)\inv \Hmat) (\vv^{(k)} - \xv^{(k)})\|^2_2  \nonumber\\
	=& \|\vv^{(k)} - \xv^{(k)}\|_2^2 - \|\Rmat^{-\frac{1}{2}}\Hmat (\vv^{(k)} - \xv^{(k)})\|_2^2 \label{Eq:xk_vkminus1}\\
	\le & \|\vv^{(k)} - \xv^{(k)}\|_2^2 \\
	=&  \| {\cal D}_{\sigma_{k}} (\xv^{(k)}) - \xv^{(k)}\|_2^2  \\
	\le&  \sigma_k^2 nBC \label{Eq:convg_C},
	\end{align} 
	where in \eqref{Eq:xk_vkminus1} $\Rmat = \Hmat\Hmat\ts$ and in \eqref{Eq:convg_C} we have used the bounded denoiser.
	Using Assumption \ref{Ass:non_in} (non-increasing denoiser), we have
	$\sigma_k \rightarrow 0$,  $\|\xv^{(k+1)} - \xv^{(k)}\|_2^2 \rightarrow 0$ and thus $\xv^{(k)}$ converges.
\end{proof}

\subsection{PnP-ADMM vs. PnP-GAP}
Comparing PnP-GAP in Eqs~\eqref{Eq:x_k+1}-\eqref{Eq:Denoise_GAP} and PnP-ADMM in Eqs~\eqref{Eq:solvex}-\eqref{Eq:u_k+1}, we can see that PnP-GAP only has two subproblems (rather than three as in PnP-ADMM) and thus the computation is faster. % workload i
It was pointed out in~\cite{Liu18TPAMI} that in the noise-free case, ADMM and GAP perform the same and this has been mathematically proved. However, in the noisy case, ADMM usually performs better since it considered noise in the model and below we give a geometrical explanation.

As shown in Fig.~\ref{fig:ADMM_GAP}, where we used a two-dimensional sparse signal as an example, we can see that since GAP imposes $\yv=\Hmat \hat\xv$, the solution of GAP $\hat\xv$ is always on the dash-green line (due to noise, this line might be deviated from the solid green line where the true single lies on). However, the solution of ADMM does not have this constraint but to minimize $\|\yv-\Hmat\xv\|_2^2$, which can be in the dash-red circle or the yellow-dash circle depending on the initialization. In the noise-free case, both GAP and ADMM will have a large chance to converge to the true signal $\xv^*$. However, in the noisy case, the Euclidean distance between GAP solution and the true signal ($\|\hat{\xv} - \xv^*\|_2$) might be larger than that of ADMM. Again, the final solution of ADMM depends on the  initialization and it is not guaranteed to be more accurate than that of GAP.
%\end{remark}

\begin{figure}
	\begin{center}
		\includegraphics[width=\linewidth]{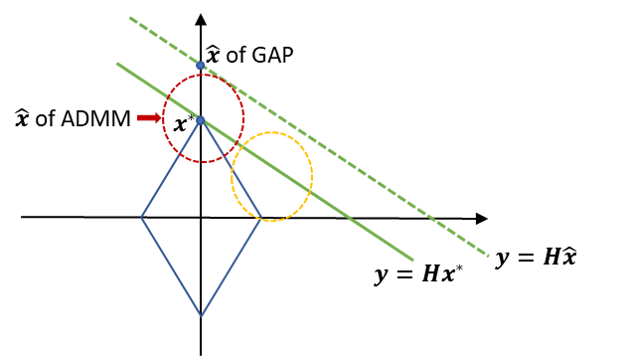}
	\end{center}
%	\vspace{-3mm}
	\caption{Demonstration of the solution of ADMM (within the dash-red circle or dash-yellow circle depending on the initialization) and GAP ($\hat\xv$) under the noisy case. The difference is that the solution of GAP always lies on $\yv = \Hmat \hat\xv$.}
%	\vspace{-5mm}
	\label{fig:ADMM_GAP}
\end{figure}

\begin{table*}[!htbp]
	% \vspace{-5mm}
	\caption{The average results of PSNR in dB (left entry in each cell) and SSIM (right entry in each cell) and run time per measurement/shot in minutes by different algorithms on 6 benchmark datasets.}
	\centering
	% \vspace{-2mm}
	\resizebox{2.07\columnwidth}{!}
	{
 		% \begin{tabular}{cV{3}c|c|c|c|c|cV{3}c|c}
		\begin{tabular}{cV{3}ccccccV{3}cc}
			\hlineB{3}
			Algorithm& \texttt{Kobe} & \texttt{Traffic} & \texttt{Runner} & \texttt{Drop} & \texttt{Crash} & \texttt{Aerial} & Average &  Run time (min) \\
			\hlineB{3}
			GAP-TV           & 26.46, 0.8848 & 20.89, 0.7148 & 28.52, 0.9092 & 34.63, 0.9704 & 24.82, 0.8383 & 25.05, 0.8281 & 26.73, 0.8576 & 0.07 \\
			\hline\textbf{}
			{DeSCI (GAP-WNNM)} & {\bf 33.25}, {0.9518} & {\bf 28.71}, {\bf 0.9250} & {\bf 38.48}, {\bf 0.9693} & 43.10, 0.9925 & {\bf 27.04}, {\bf 0.9094} & 25.33, 0.8603 & {\bf 32.65}, {\bf 0.9347} & 103.0 \\
			\hlineB{3}
			PnP-VBM4D        & 30.60, 0.9260 & 26.60, 0.8958 & 30.10, 0.9271 & 26.58, 0.8777 & 25.30, 0.8502 & 26.89, 0.8521 & 27.68, 0.8882 & 7.9  \\
			\hline
			\rowcolor{lightgray}
			PnP-FFDNet       & 30.50, 0.9256 & 24.18, 0.8279 & 32.15, 0.9332 & 40.70, 0.9892 & 25.42, 0.8493 & 25.27, 0.8291 & 29.70, 0.8924 & {{\bf 0.05} (GPU)} \\
			\hline
			
			PnP-WNNM-TV      & 33.00, 0.9520 & 26.76, 0.9035 & 38.00, 0.9690 & 43.27, 0.9927 & 26.25, 0.8972 & 25.53, 0.8595 & 32.14, 0.9290 & 40.8 \\
			\hline
			PnP-WNNM-VBM4D   & 33.08, \bf{0.9537} & 28.05, 0.9191 & 33.73, 0.9632 & 28.82, 0.9289 & 26.56, 0.8874 & {\bf 27.74}, {\bf 0.8852} & 29.66, 0.9229 & 25.0 \\
			\hline
			% PnP-WNNM-FFDNet ($\sigma=4$)  & 33.07, {\bf 0.9550} & 26.76, 0.9012 & 37.81, 0.9692 & {\bf 43.65}, {\bf 0.9934} & 26.18, 0.8986 & 25.54, 0.8603 & 32.17, 0.9296 & 34.1 \\
			% \hline
			% PnP-WNNM-FFDNet results are all based on the sigma=6 (maxiter=60) version
			PnP-WNNM-FFDNet & 32.54, 0.9511 & 26.00, 0.8861 & 36.31, 0.9664 & \bf{43.45}, \bf{0.9930} & 26.21, 0.8930 & 25.83, 0.8618 & 31.72, 0.9252 & 17.9 \\
			\hlineB{3}
	\end{tabular}}
%	\vspace{-5mm}
	\label{Tab:results_4video}
\end{table*}

\section{Integrate Various Denoisers into PnP for SCI Reconstruction\label{Sec:P3}}
It can be seen from Theorem~\ref{The:GAP_SCI_bound} that the reconstruction error term depends on the bounded denoising algorithm. 
In other words, a better denoiser with a smaller $C$ can provide a reconstruction result closer to the true signal.
Various denoising algorithms exist for different tasks based on speed and quality.
Usually, a faster denoiser~\eg, TV, is very efficient, but cannot provide high-quality results.  
The middle class algorithms~\eg, BM3D~\cite{Dabov07BM3D} can provide decent results with a longer running time.
More advanced denoising algorithm,~\eg, WNNM~\cite{Gu14CVPR,Gu17IJCV} can provide better results~\cite{Liu18TPAMI}, but even slower.
Another line of emerging denoising approaches is based on deep learning~\cite{XieNIPS2012_deepDN,zhang2017beyond}, which can provide decent results within a short time after training, but they are usually not robust to noise levels and in high noisy cases, the results are not good.
Different from conventional denoising problems, in SCI reconstruction, the noise level in each iteration is usually from large to small and the dynamic range can from 150 to 1, considering the pixel values within $\{0,1,\dots, 255\}$. 
Fortunately, FFDNet~\cite{Zhang18TIP_FFDNet} has provided us a fast and flexible solution under various noise levels. 
%Bearing the speed and performance in mind, in this section, we proposed the {\bf Progressive Plug-and-Pay GAP for SCI}, termed P3-GAP for SCI reconstruction. 

By integrating these denoising algorithms into PnP-GAP/ADMM, we can have different algorithms (Table~\ref{Tab:results_4video} and Fig.~\ref{fig:demo}) with  different results. It is worth noting that DeSCI can be seen as PnP-WNNM, and its best results are achieved by exploiting the correlation across different video frames. 
On the other hand, most existing deep denoising priors are still based on images. Therefore, it is expected that the results of PnP-GAP/ADMM-FFDNet are not as good as DeSCI. We anticipate that with the advances of deep denoising priors, better video denoising method will boost the our PnP-based SCI reconstruction results.
In addition, these different denoisers can be used in parallel, \ie, one after each other in one GAP/ADMM iteration or used sequentially, \ie, the first $K_1$ iterations using FFDNet and the next $K_2$ iterations using WNNM to achieve better results.

It is worth noting that by assuming WNNM being a bounded denoiser, DeSCI~\cite{Liu18TPAMI}, which is GAP-WNNM, is a special case of our PnP-GAP.

%\begin{figure}[!htbp]
%%	\vspace{-1mm}
%	\begin{center}
%		\includegraphics[width=1.0\linewidth]{}
%	\end{center}
%	%\vspace{-3mm}
%	\caption{Reconstructed frames of PnP-GAP algorithms (GAP-TV, PnP-FFDNet, DeSCI (GAP-WNNM), PnP-WNNM-FFDNet) on six simulated benchmark video SCI datasets.}
%%	\vspace{-5mm}
%	\label{fig:comp_frames}
%	%\label{fig:onecol}
%\end{figure}

%\vspace{-2mm}
\section{Results \label{Sec:results}}
%\vspace{-2mm}
We applied the proposed PnP algorithms to both simulation~\cite{Liu18TPAMI,Ma19ICCV} and real datasets captured by the SCI cameras~\cite{Patrick13OE,Yuan14CVPR}.
Conventional denoising algorithms  include TV~\cite{Yuan16ICIP_GAP}, VBM4D~\cite{Maggioni2012VideoDD} and WNNM~\cite{Gu14CVPR} are used. For the deep learning based denoiser, we have tried various algorithms and found that FFDNet~\cite{Zhang18TIP_FFDNet} provides the best results.

%\section{Results}
%We applied the proposed P3-GAP into both simulation and real datasets captured by the SCI cameras~\cite{Patrick13OE,Yuan14CVPR}.
%	\vspace{-8mm}
\paragraph{Simulation: Benchmark Data}
We follow the simulation setup in~\cite{Liu18TPAMI} using the six datasets, \ie, \texttt{Kobe, Traffic, Runner, Drop, crash,} and \texttt{aerial}~\cite{Ma19ICCV}\footnote{The results of DeSCI (GAP-WNNM) is different from those reported in \cite{Ma19ICCV} because of parameter settings of DeSCI, specifically the input estimated noise levels for each iteration stage. We use exactly the same parameters as the DeSCI paper~\cite{Liu18TPAMI}, which is publicly available at \href{https://github.com/liuyang12/DeSCI}{https://github.com/liuyang12/DeSCI}.}, where $B=8$ video frames are compressed into a single measurement.
Table~\ref{Tab:results_4video} summarizes the PSNR and SSIM~\cite{Wang04imagequality} results of these 6 benchmark data using various denoising algorithms, where DeSCI can be categorized as GAP-WNNM, and PnP-WNNM-FFDNet used 50 iterations FFDNet and then 60 iterations WNNM, similar for GAP-WNNM-VBM4D.
It can be observed that:
\begin{itemize}
	\item [$i$)] By using GPU, PnP-FFDNet is now the fastest algorithm\footnote{{Only a regular GPU is needed to run FFDNet and since FFDNet is performed in a frame-wise manner, we do not need a large amount of CPU or GPU RAM (no more than 2GB here) compared to other video denoisers using parallization (even with parallelization, other algorithms listed here are unlikely to outperform PnP-FFDNet in terms of speed).}}; it is even faster than GAP-TV, meanwhile providing about 3dB higher PSNR than GAP-TV. Therefore, PnP-FFDNet can be used as {\em an efficient baseline} in SCI reconstruction. Since the average PSNR is close to 30dB, it is applicable in real cases. This will be further verified in the following subsection on large-scale datasets.
	\item [$ii$)]
	DeSCI still provides the best results on average; however, by combing other algorithms with WNNM, comparable results (\eg PnP-WNNM-FFDNet) can be achieved by only using $1/6$ computational time.
\end{itemize}

Fig.~\ref{fig:comp_frames_full} plots selected frames of the six datasets using different algorithms. It can be seen that though GAP-WNNM still leads to best results, the difference between PnP-FFDNet and DeSCI is very small and in most cases, they are close to each other. %results are acceptable in most cases and they are close to GAP-WNNM.
%\textcolor{red}{Xin2Yang: I will send you a new data called ``crash", for which WNNM cannot provide good results. }

It can be seen clearly that PnP-FFDNet provides overall comparable results as the state-of-the-art (best among all the seven methods listed here) method DeSCI, as shown in Fig.~\ref{fig:comp_frames_full} with significantly reduced running time (3 seconds vs. 103 minutes). % as shown in Table~1 in the main paper. 

\begin{figure*}[!htbp]
	\begin{center}
		\includegraphics[width=1.0\linewidth]{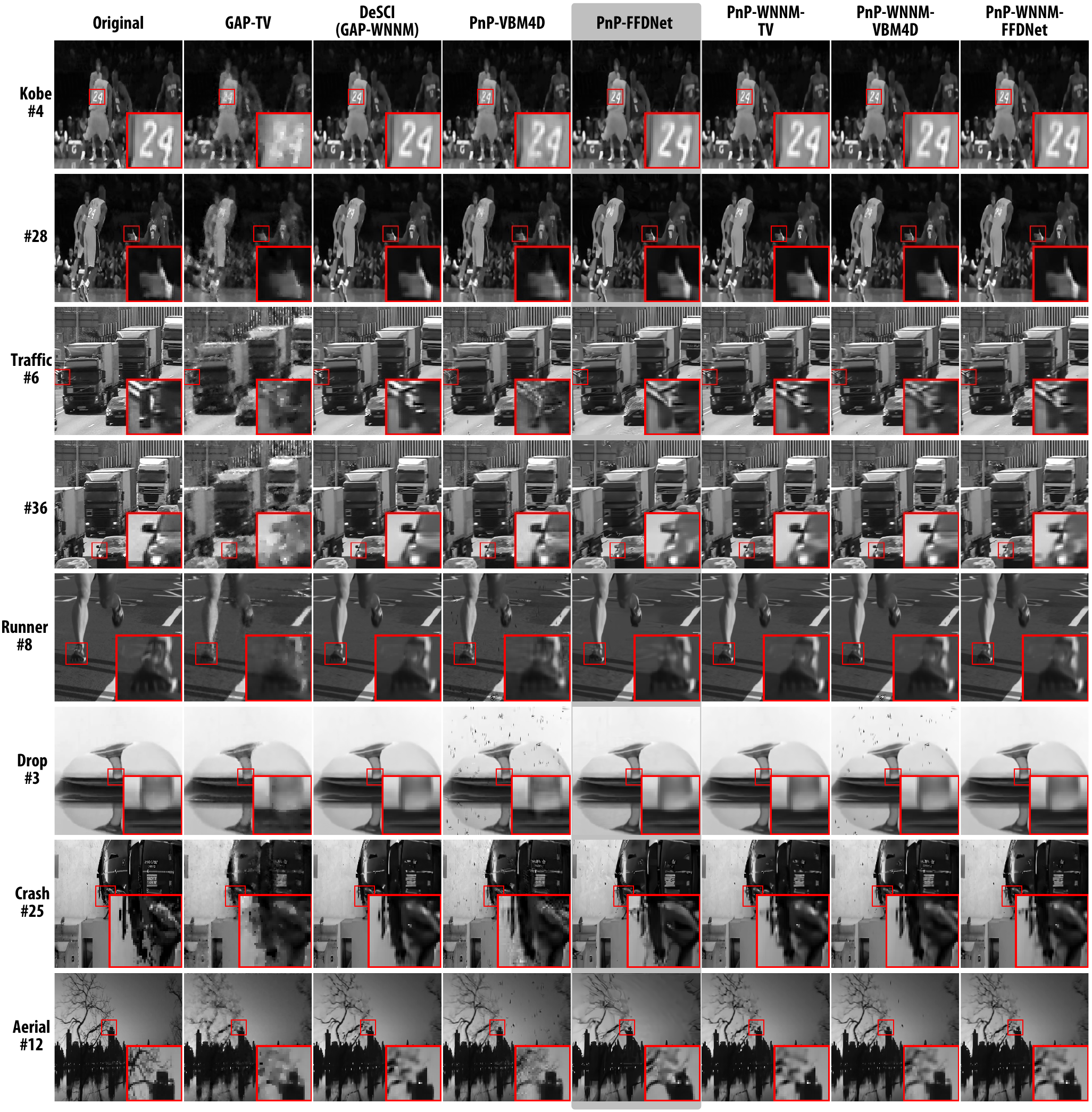}
	\end{center}
	\vspace{-3mm}
	\caption{Full comparison of reconstructed frames of PnP-GAP algorithms (GAP-TV, DeSCI (GAP-WNNM), PnP-VBM4D, PnP-FFDNet, PnP-WNNM-TV, PnP-WNNM-VBM4D, and PnP-WNNM-FFDNet) on six simulated video SCI datasets.}
	\label{fig:comp_frames_full}
\end{figure*}

%%%%%% 

%\vspace{-6mm}
\paragraph{Simulation: Large-scale Data}
Similar to the benchmark data, we simulate the color video SCI measurements for large-scale data with four YouTube slow-motion videos, \ie, \texttt{Messi}\footnote{\href{https://www.youtube.com/watch?v=sbPrevs6Pd4}{https://www.youtube.com/watch?v=sbPrevs6Pd4}}, \texttt{Hummingbird}\footnote{\href{https://www.youtube.com/watch?v=RtUQ_pz5wlo}{https://www.youtube.com/watch?v=RtUQ\_pz5wlo}}, \texttt{Swinger}\footnote{\href{https://www.youtube.com/watch?v=cfnbyX9G5Rk}{https://www.youtube.com/watch?v=cfnbyX9G5Rk}}, and \texttt{Football}\footnote{\href{https://www.youtube.com/watch?v=EGAuWZYe2No}{https://www.youtube.com/watch?v=EGAuWZYe2No}}. 

The color video SCI system and sensing process follows the color video and depth SCI system in \cite{Yuan14CVPR}. The scematic of a color video SCI system is shown in Fig.~\ref{fig:video_color_sci}. A sequence of color scene is coded by the corresponding shifted random binary masks at each time step and finally summed up to form a shapshot measurement on the color Bayer RGB sensor (with a ``RGGB'' Bayer color filter array here).

For reconstruction, the snapshot measurement is splitted into four ``RGGB'' sub-measurements according to the Bayer pattern. These sub-measurements are reconstructed measurement-by-measurement following the gray-scale reconstruction process by iteratively update the signal in data domain (using GAP or ADMM) and prior domain (using plug-and-play denoisers). Finally, the reconstructed sub-video-frames representing different color channels (R, G1, G2, and B) are recombined to a mosaic image and then demosaiced to form full-color video frames. Note that for simulation of large-scale data using YouTube videos, we do not have the access to the raw video data before demosaicing, so we simply ``up-sample'' it by putting each color channel as the mosaic R, G1, G2, and B channels. In this way, there are two identical G channels here and the reconstructed and the size of demosaiced image is doubled (both in width and height). For example, for UHD color video \texttt{Football} with original image size of $3840\times1644$, the reconstructed video frames have the size of $7680\times3288$ (demosaiced). To make the readers less confusing, we simply call it UHD here (8K UHD exactly). And the quantitative metrics (PSNR and SSIM) are calculated before demosaicing.

%\textit{We again strongly encourage the readers to watch the supplementary videos of the reconstructed large-scale video data.}
\begin{itemize}
	\item \href{./largescale\_messi24.avi}{largescale\_messi24.avi}: A $1920\times1080\times24$ video reconstructed from a snapshot.
	\item \href{./largescale\_hummingbird40.avi}{largescale\_hummingbird40.avi}: A $1920\times1080\times40$ video reconstructed from a snapshot.
	\item \href{./largescale\_swinger20.avi}{largescale\_swinger20.avi}: A $3840\times2160\times20$ video reconstructed from a snapshot.
	\item \href{./largescale\_football48.avi}{largescale\_football48.avi}: A $3840\times1644\times48$ video reconstructed from a snapshot.
\end{itemize}

\begin{figure*}[!htbp]
	\begin{center}
		\includegraphics[width=0.9\linewidth]{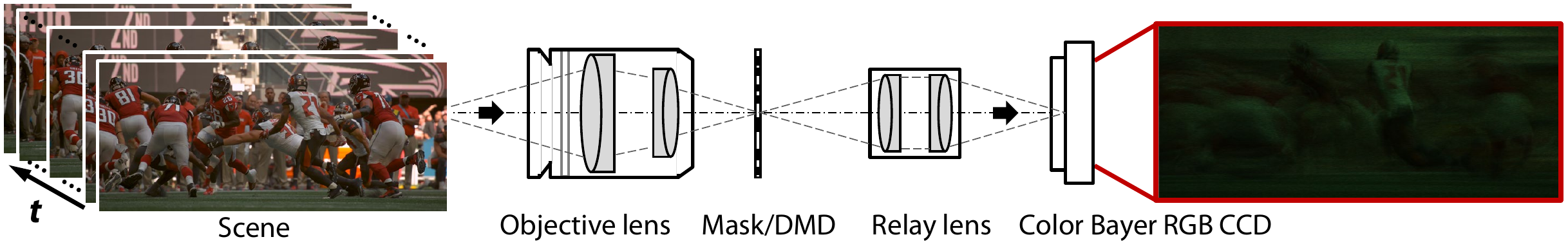}
	\end{center}
	%\vspace{-3mm}
	\caption{Schematic of a color video SCI system and its snapshot measurement (showing in Bayer RGB mode). A ``RGGB'' Bayer pattern is shown here.}
	\label{fig:video_color_sci}
\end{figure*}

\paragraph{Performance varying compression ratios ($\Bmat$)} % (fold)
\label{par:performance_varying_compression_ratio_}
In order to further illustrate the efficiency of the proposed PnP algorithms for SCI, especially in real SCI systems varying compression ratios ($B$), we show the reconstruction quality and speed of three PnP-SCI algorithms with compression ratios from $8$ to $48$ in Fig.~\ref{fig:quality_vary_codenum} and Fig.~\ref{fig:speed_vary_codenum}, respectively. The data we used is the downsampled grayscale video of \texttt{Football} with pixel resoltion of 720p ($1280\times720$). The other algorithms listed in Tab.~1 are too slow to be compared. And other deep-learning-based end-to-end networks, like~\cite{Ma19ICCV,Qiao2020_APLP} would be not flexible to different compression ratios and require re-training the network for each compression ratio.

As we can see in Fig.~\ref{fig:quality_vary_codenum} and Fig.~\ref{fig:speed_vary_codenum}, PnP-FFDNet is of the highest quality and fastest speed among these three fast PnP-SCI algorithms even with high compression ratios (up to 48). This further supports the idea that PnP-FFDNet would be the baseline for SCI reconstruction. 

\begin{figure}[t]
	\begin{center}
		\includegraphics[width=0.9\linewidth]{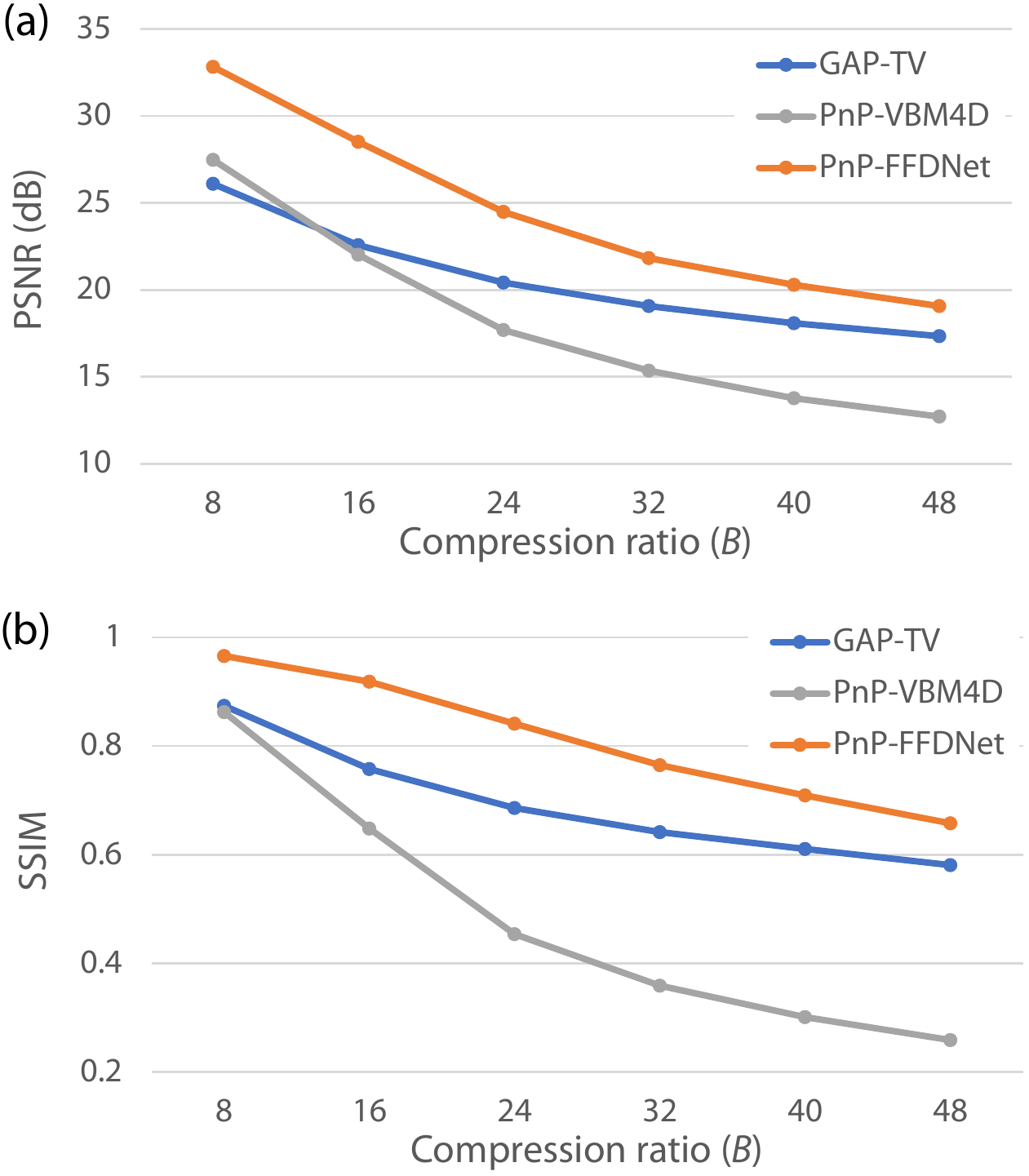}
	\end{center}
	%\vspace{-3mm}
	\caption{Reconstruction quality, \ie, PSNR (a) and SSIM (b) varying compression ratios from 8 to 48. Higher is better.}
	\label{fig:quality_vary_codenum}
\end{figure}

\begin{figure}[t]
	\begin{center}
		\includegraphics[width=0.9\linewidth]{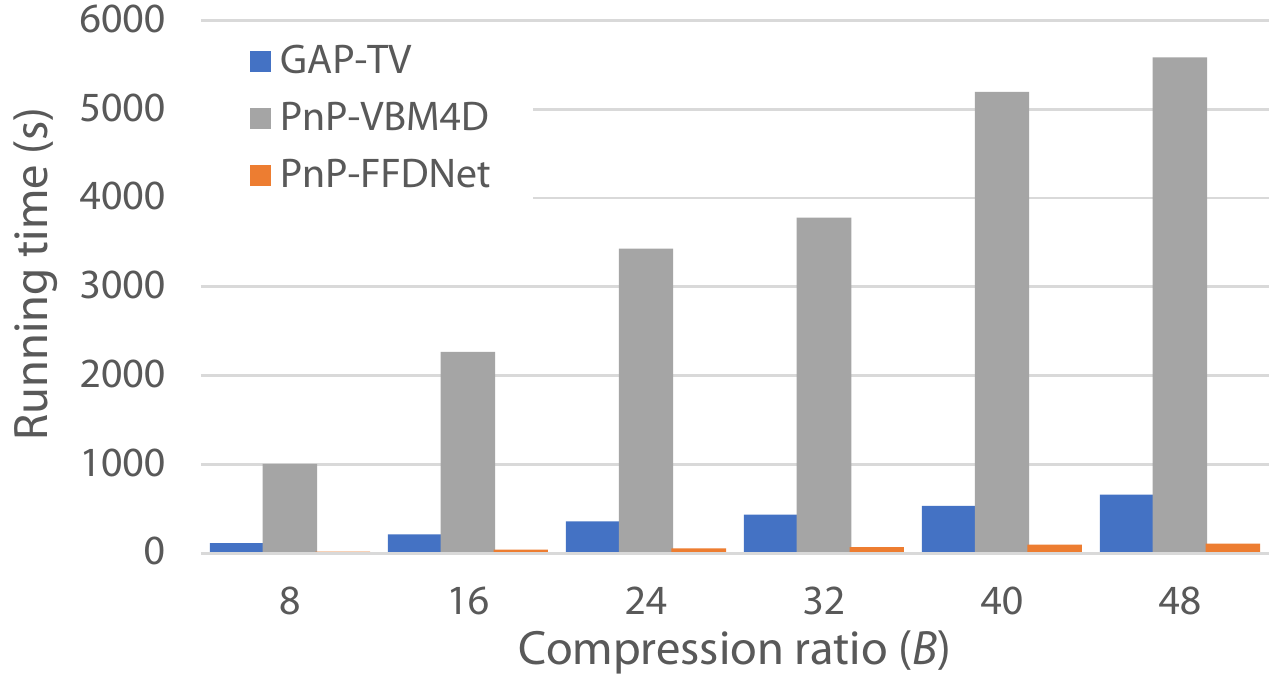}
	\end{center}
	%\vspace{-3mm}
	\caption{Running time (in seconds) varying compression ratios from 8 to 48. Lower is better.}
	\label{fig:speed_vary_codenum}
\end{figure}

%We hereby show the results of large-scale data results in Fig.~\ref{fig:comp_largescale}, where all videos (in SM) are reconstructed from a snapshot with sizes from $1920\times1080\times 24$ to $3840\times1644\times 48$. It is worth noting that only GAP-TV and PnP-FFDNet can be used as all other algorithms are too slow (more than 10 hours) to run\footnote{These large-scale datasets are of different sizes and different compression rates. From the speed perspective, it might be able to train  (big and deep) networks for each of them. But this will need a significant amount of computational resources (GPUs with large memory) and training data and time.
%By contrast, our PnP-FFDNet provides a good trade-off of speed, accuracy and flexibility, since we don't need to re-train the model.}. In general,  PnP-FFDNet provides more than 4dB higher PSNR than GAP-TV.
%Please pay attention to the details recovered by PnP-FFDNet. These results along with the high speed can lead to real applications of SCI in our daily life videos.
%%\textcolor{red}{Xin2Yang: we may need 3 or 4 large-scale videos only showing GAP-TV and PnP-FFDNet is enough....}
%
%% should be two column width
\begin{figure*}[!htbp]
	%\vspace{-3mm}
	\centering
		\includegraphics[width=.97\linewidth]{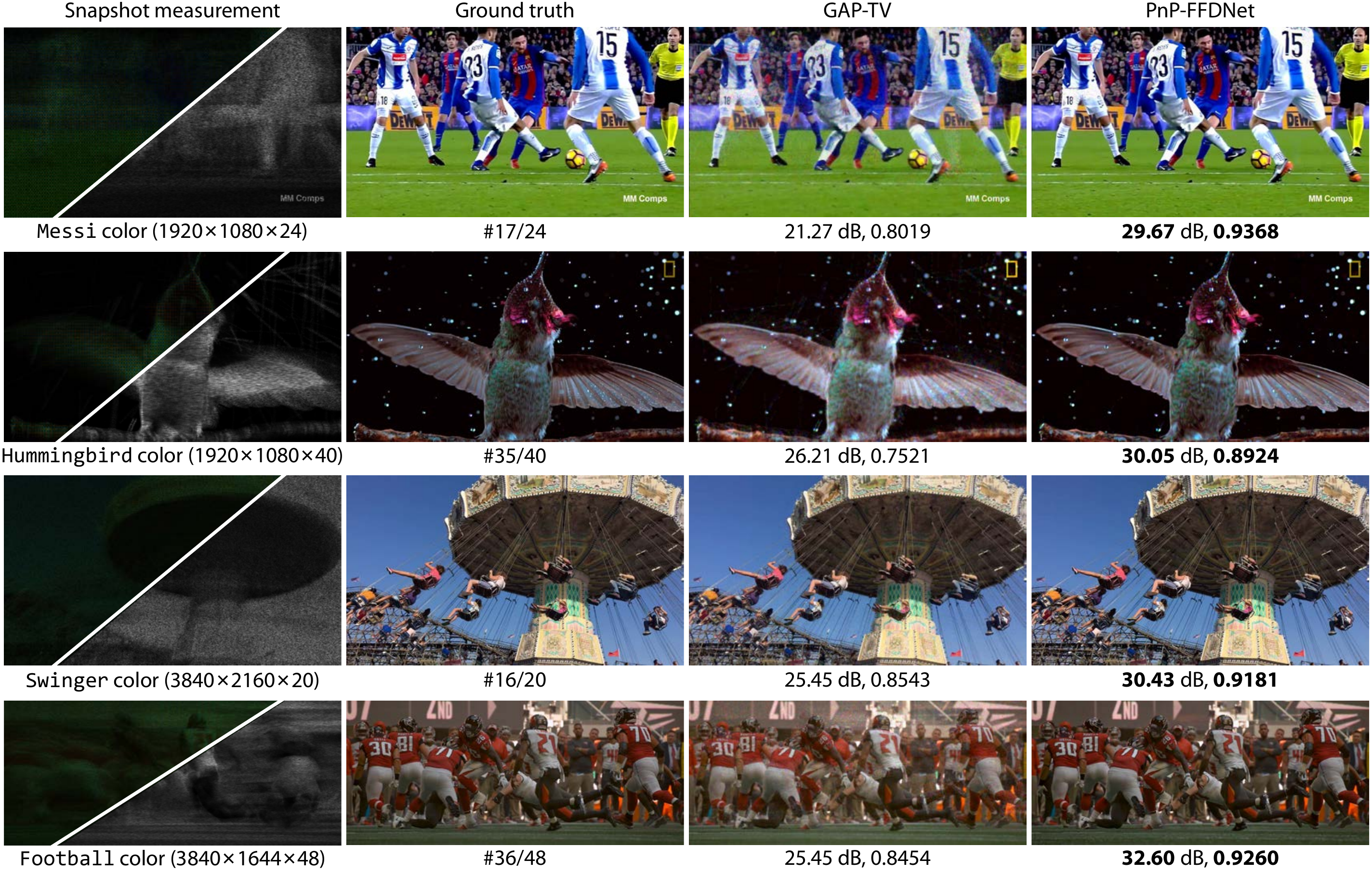}
	%\vspace{-2mm}
	\caption{Reconstructed frames of PnP-GAP algorithms (GAP-TV and PnP-FFDNet) on four simulated large-scale video SCI datasets.}
%	\vspace{-4mm}
	\label{fig:comp_largescale}
	%\label{fig:onecol}
\end{figure*}
\begin{figure}[!htbp!]
%	\vspace{-3mm}
	\begin{center}
		\includegraphics[width=.9\linewidth]{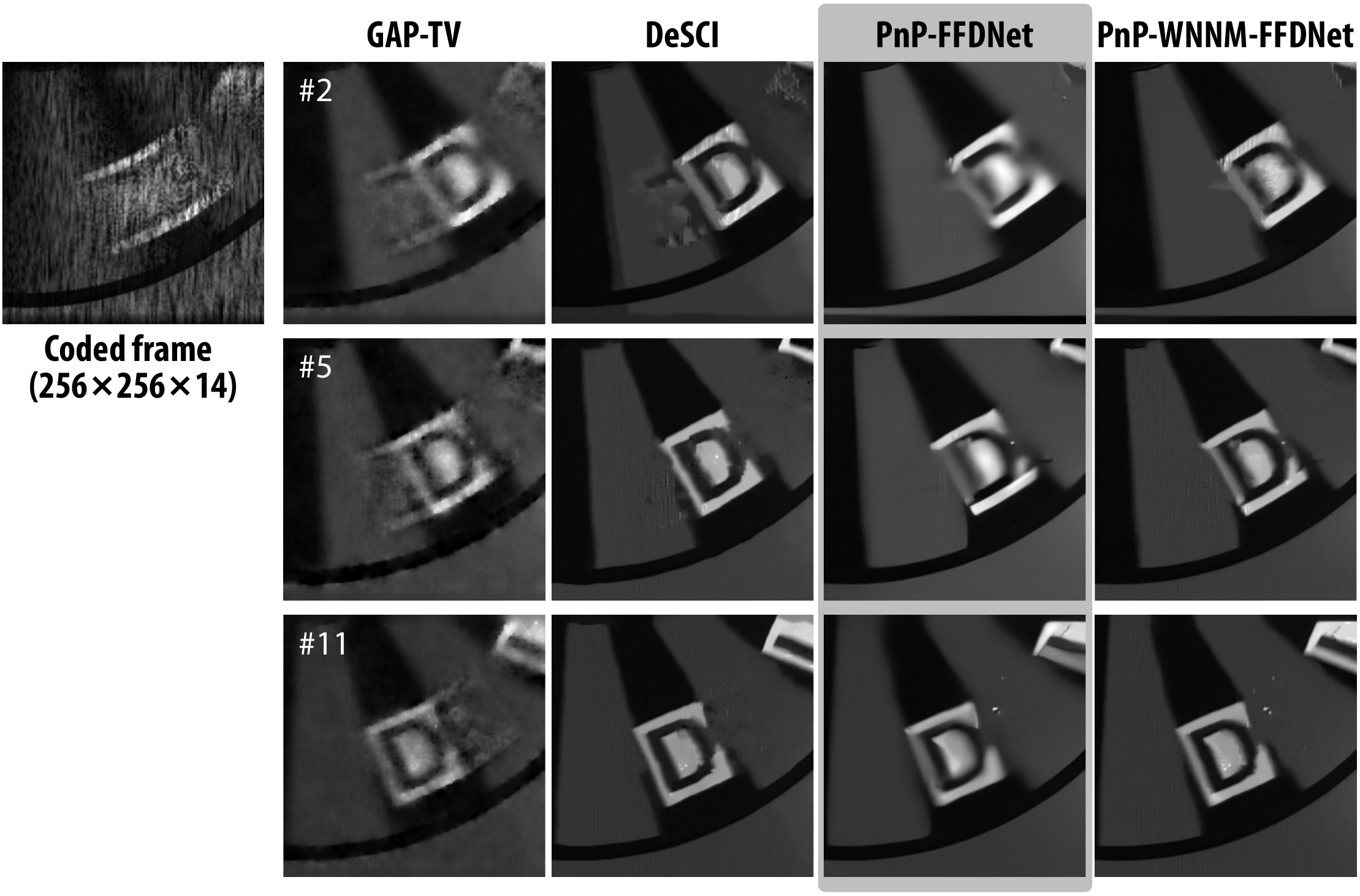}\\
	\end{center}
%	\vspace{-5mm}
	\caption{Real data: \texttt{chopper wheel} ($256\times256\times14$).}%, bottom: \texttt{labs} ($484\times248\times10$).}
%	\vspace{-2mm}
	\label{fig:real_chopperwheel}
	%\label{fig:onecol}
\end{figure}
\begin{figure}[!htbp]
	\begin{center}
		\includegraphics[width=.9\linewidth]{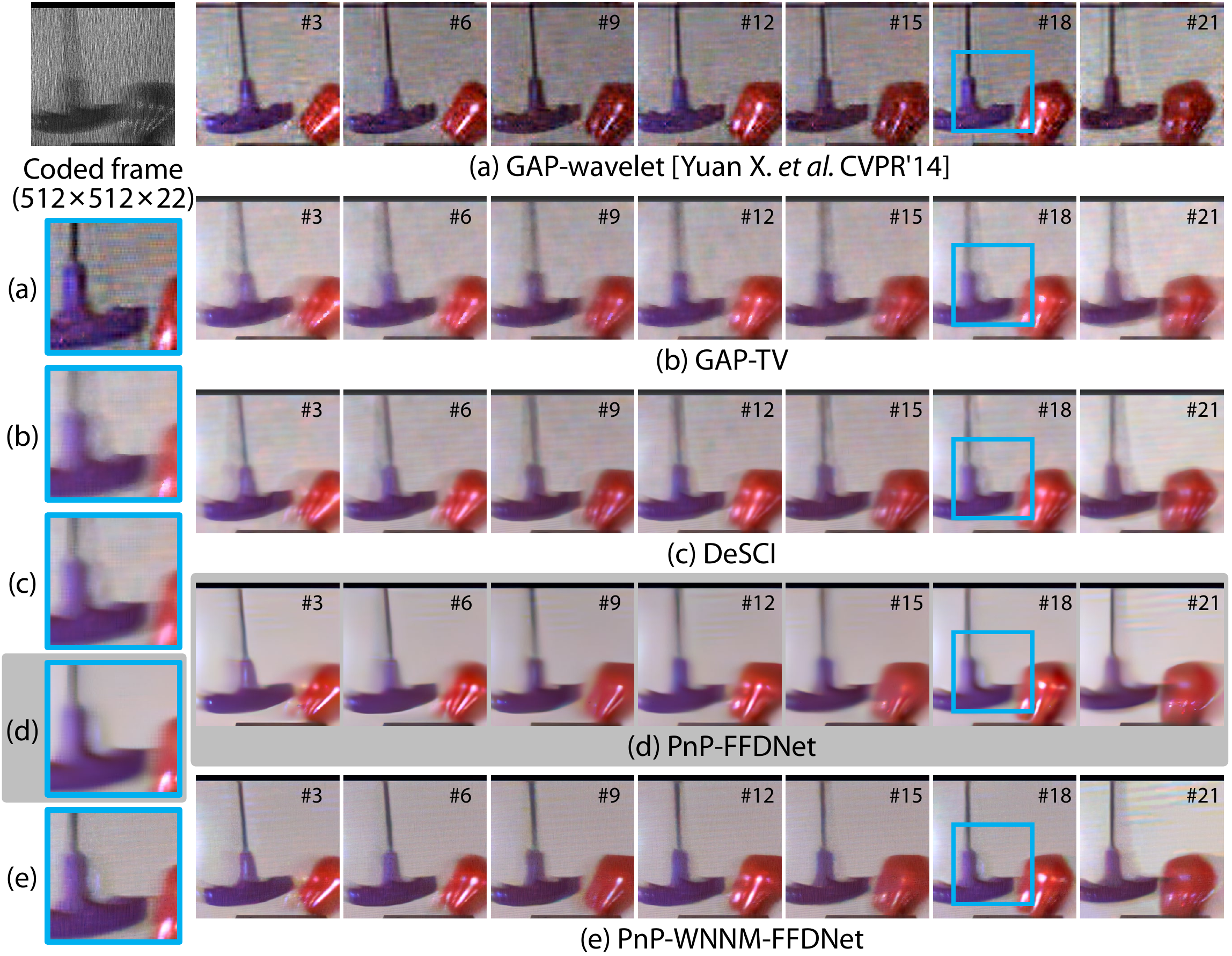}
	\end{center}
%	\vspace{-3mm}
	\caption{Real data: \texttt{hammer} color video SCI ($512\times512\times22$).}
%	\vspace{-4mm}
	\label{fig:real_color_hammer}
	%\label{fig:onecol}
\end{figure}

%\vspace{-6mm}
\paragraph{Real Data}
Lastly, we apply the proposed PnP framework to real data captured by SCI cameras to verify the robustness of the algorithms. Figs.~\ref{fig:real_chopperwheel}-\ref{fig:real_color_hammer} show the results of different compression ratios and different sizes. It can be observed that in most cases, PnP-FFDNet can provide comparable or even better (\texttt{chopper wheel}) results than DeSCI but again with a significant saving on computational time. 
The running time of these data using different algorithms is shown in Table~\ref{Table:time_real}, where we can see that PnP-FFDNet provides results around 12 seconds even for a $512\times512\times22$ large-scale video.

We show more results of \texttt{labs}~\cite{Qiao2020_APLP} and \texttt{UCF}~\cite{Sun16OE} in Fig.~\ref{fig:real_labs} and Fig.~\ref{fig:real_ucf}, respectively. 
From Fig.~\ref{fig:real_labs} and Fig.~\ref{fig:real_ucf}, we can see that PnP-FFDNet, which only takes about 12 seconds for reconstruction, can provide comparable results as DeSCI, which needs hours even when performed in a frame-wise manner, as shown in Table~\ref{Table:time_real}. And PnP-FFDNet is significantly better than the speed runner-up GAP-TV in terms of motion-blur reduction and detail preservation, as shown in Figs.~\ref{fig:real_labs} and \ref{fig:real_ucf}. Note that PnP-FFDNet is more than $5\times$ faster than GAP-TV in real datasets with regular size, and more than $7\times$ faster in large-scale datasets. In this way, PnP algorithms for SCI achieves a good balance of efficiency and flexibility and PnP-FFDNet could serve as a baseline for SCI recovery. 

\begin{figure}[t]
	\begin{center}
		\includegraphics[width=1.0\linewidth]{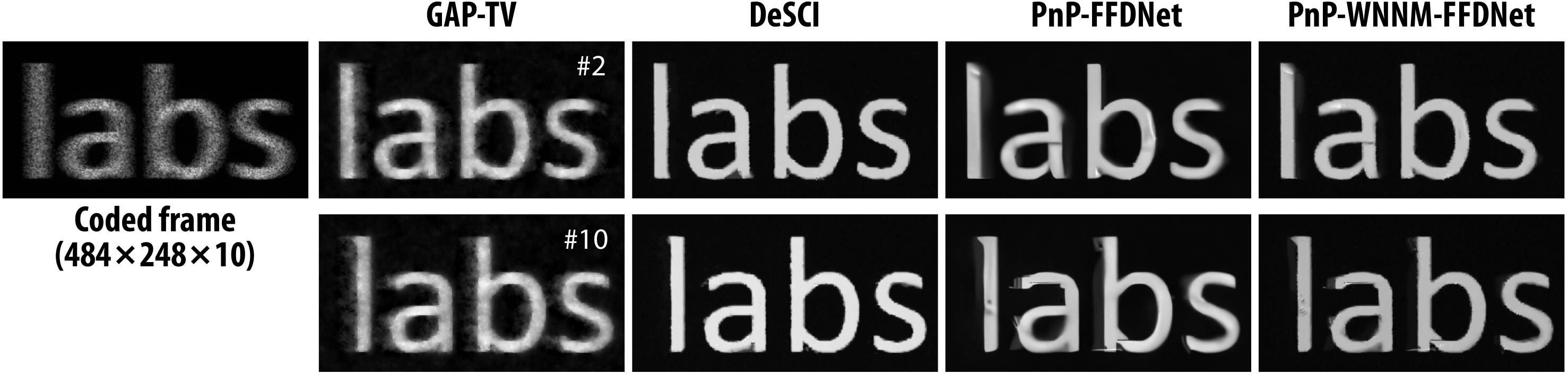}
	\end{center}
	%\vspace{-3mm}
	\caption{Real data: \texttt{labs} high-speed video SCI ($484\times248\times10$).}
	\label{fig:real_labs}
\end{figure}

\begin{figure}[t]
	\begin{center}
		\includegraphics[width=1.0\linewidth]{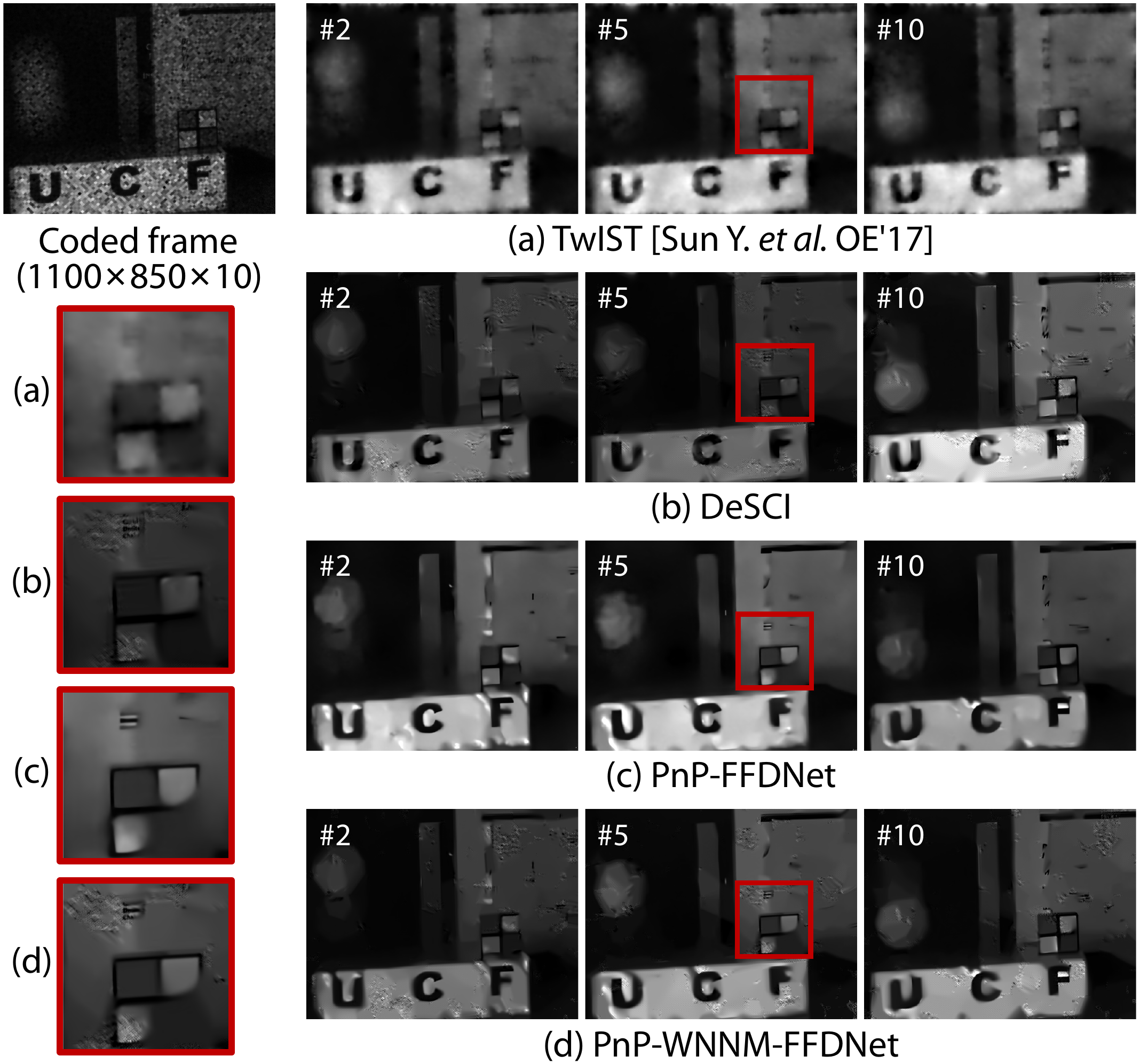}
	\end{center}
	%\vspace{-3mm}
	\caption{Real data: \texttt{UCF} high-speed video SCI ($1100\times850\times10$).}
	\label{fig:real_ucf}
\end{figure}

%\begin{figure}[t]
%	\begin{center}
%		\includegraphics[width=1.0\linewidth]{Figures/fig04_real_cacti_labs.pdf}
%	\end{center}
%	\vspace{-3mm}
%	\caption{Real data: \texttt{labs} high-speed video SCI ($484\times248\times10$).}
%	\vspace{-2mm}
%	\label{fig:real_labs}
%	%\label{fig:onecol}
%\end{figure}

%\begin{figure}[t]
%	\begin{center}
%		\includegraphics[width=1.0\linewidth]{Figures/fig06_real_cacti_ucf.pdf}
%	\end{center}
%	\vspace{-3mm}
%	\caption{Real data: \texttt{UCF} high-speed video SCI ($1100\times850\times10$).}
%	\vspace{-5mm}
%	\label{fig:real_ucf}
%	%\label{fig:onecol}
%\end{figure}

% \begin{table}[!thbp]
\begin{table*}
 %\vspace{-1mm}
\caption{{Running time (second) of real data using different algorithms. Visual results of \texttt{labs} and \texttt{UCF} are shown in the SM.}}
%\vspace{1.5mm}
\resizebox{2.00\columnwidth}{!}
	{
	\begin{threeparttable}
\begin{tabular}{c cV{3}cc>{\columncolor[gray]{.8}[.5\tabcolsep]}cc}
\hlineB{3}
Real dataset  & Pixel resolution & {GAP-TV} & {DeSCI} & {PnP-FFDNet} & {PnP-WNNM-FFDNet} \\ \hlineB{3}
\texttt{chopperwheel} & $256\times256\times14$       & 11.6                        & 3185.8                     & \textbf{2.7}                             & 1754.7                               \\ \hline
\texttt{labs}         & $484\times248\times10$       & 36.9                        & 6471.3                     & \textbf{4.5}                             & 3226.5                               \\ \hline
\texttt{hammer} color & $512\times512\times22$       & 94.5                        & 4791.0                     & \textbf{12.6}                            & 1619.4                               \\ \hline
\texttt{UCF}          & $1100\times850\times10$      & 300.8                       & 2938.8*                    & \textbf{12.5}                            & 1504.5*            \\ \hlineB{3}
\end{tabular}
\begin{tablenotes}
  \item[*] WNNM is performed in a frame-wise manner for large-scale datasets.
  \end{tablenotes}
  \end{threeparttable}
}
%\vspace{-6mm}
\label{Table:time_real}
\end{table*}

%\vspace{-2mm}
\section{Conclusions \label{Sec:Con}}
%\vspace{-2mm}
We proposed plug-and-play algorithms for the reconstruction of snapshot compressive video imaging systems. By integrating deep denoisers into the PnP framework, we not only get excellent results on both simulation and real datasets, but also provide reconstruction in a short time with sufficient flexibility. 
Convergence results of PnP-GAP are proved and we first time show that SCI can be used in large-scale (HD, FHD and UHD) daily life videos. This paves the way of practical applications of SCI.

Regarding the future work, one direction is to train a better video (rather than image) denoising network and apply it to the proposed PnP framework to further improve the reconstruction results. The other direction is to build a real large-scale video SCI system to be used in advanced cameras~\cite{BradyNature12}.

%\vspace{-4mm}
\paragraph{Acknowledgments.} The work of Jinli Suo and Qionghai Dai is partially supported by NSFC 61722110, 61931012, 61631009 and Beijing Municipal Science \& Technology Commission (BMSTC) (No. Z181100003118014).

{\small

%	\bibliographystyle{ieee_fullname}
%	\bibliography{egbib,reference_ECCV,reference_sideinfor,reference_xin}
}

%\subsection{Simulation}
%We follow the simulation setup in~\cite{Liu18TPAMI} using the four datasets, \ie, \texttt{Kobe, Traffic, Runner, Drop}, where $B=8$ video frames are compressed into a single measurement and the same sensing matrix is used.
%
%\textcolor{red}{Xin2Yang: I will send you a new data called ``crash", for which WNNM cannot provide good results. }
%\subsection{Real Data}
%\textcolor{red}{Xin2Yang: Please have the ``Duke'' fan and ``Hammer" result. For the color ``Hammer" dataset, please use ``wavelet" in step 1}
%

%
%
%\clearpage
%\newpage
%{\small
%	\bibliographystyle{ieee_fullname}
%	\bibliography{egbib,reference_ECCV,reference_sideinfor,reference_xin}
%}

\end{document}